\newtheorem{theorem}{Theorem}
\newcommand{\cut}[1]{}
\newcommand{\fixme}[1]{{\bf\textcolor{red}{[#1]}}}
\newcommand{\cutatlastminute}[1]{}
\newcommand{\parheading}[1]{\medskip{} \noindent \textbf{#1}}
\newcommand{\disco}{Disco\xspace}
\newcommand{\nddcr}{NDDisco\xspace}
\newcommand{\ankitr}{}
\begin{document}


\title{Scalable Routing on Flat Names}

\numberofauthors{2}
\author{
\alignauthor
Ankit Singla and P. Brighten Godfrey\\
       \affaddr{University of Illinois at Urbana-Champaign}\\
\alignauthor
Kevin Fall, Gianluca Iannaccone,\\ and Sylvia Ratnasamy\\
       \affaddr{Intel Labs Berkeley}\\
}


\date{~}

\maketitle
\vspace{-0.80in}


\begin{abstract}

\cut{Common wisdom is that networks scale to large size through hierarchy:
routing is performed on higher level aggregates and subsequently at
finer granularity.  The consequences are inflated path lengths, and the
use of location-dependent addresses which complicate management,
mobility, and multihoming.}
\cut{This paper takes a different approach.}

\textnormal{We introduce a protocol\let\thefootnote\relax\footnotetext{This technical 
report extends our ACM CoNEXT 2010 paper~\cite{singla10disco} by including the proofs for the theoretical results.}
which routes on flat, location-independent identifiers with guaranteed
scalability and low stretch. Our design builds on theoretical advances in
the area of compact routing, and is the first to realize these
guarantees in a dynamic distributed setting.}

\end{abstract}


\section{Introduction}

Routing scalability is highly desirable for very large, dynamic, and
resource-constrained networks, including many peer-to-peer systems and
the Internet.  Shortest-path routing algorithms (link state, distance
vector, path vector, etc.) all~\cite{fg95} require $\Omega(n)$ memory at
each router for a network with $n$ destinations, and at least as much
communication and computation to build the routing tables.

One way to scale routing is to use a structured network topology that
makes routing easy. Examples range from torus networks in
supercomputers~\cite{kessler1993cray} to planar networks which permit
greedy geographic routing~\cite{karp2000gpsr} to hypercubes, small world
networks, and other topologies in distributed hash
tables~\cite{pastry,zhao01tapestry,stoica01chord}.  But requiring a
particular highly structured topology is not feasible for
general-purpose networks.

\cut{But
requiring a particular highly structured topology is not feasible for
general-purpose networks.  Even for some P2P networks, assuming a particular
topology is infeasible: the Unmanaged Internet
Architecture~\cite{ford2006persistent}, for example, maintains links
only between friends' devices.}

For general networks, the common scaling technique is {\em hierarchy}: 
routing is performed over high-level aggregate units until it reaches
the destination's unit, at which point routing proceeds at a finer
granularity.  For example, the Internet routes at the level of IP
prefixes to a destination domain, and then over an intradomain routing
protocol to a subnet\cut{, and then often via Ethernet's spanning tree
protocol to the destination}. Hierarchy has two main problems.  First, it
can have arbitrarily high {\bf stretch}, defined as the ratio of route
length to the shortest path length.  Simultaneously guaranteeing
scalability and low stretch on arbitrary networks is a nontrivial
problem which no deployed routing protocols achieve.  Second, hierarchy
requires {\bf location-dependent addresses}, complicating management,
mobility, and multihoming.

In response, numerous recent proposals suggest routing on
location-independent {\bf flat names} \cite{gc01, stoica2002internet,
cbfp03, jokela2004host, lisp, koponen2007data,
andersen2008accountable}.  Flat names are a paradigm shift for the
network layer: rather than requiring a location-dependent IP address to
serve the needs of the routing protocol, a name is an arbitrary bit
string that can serve the needs of the application layer.  For example,
a name could be a DNS name, a MAC address, or a secure self-certifying
identifier~\cite{koponen2007data,stoica2002internet,jokela2004host}. But
no previously proposed protocols exist for scalable, low-stretch routing
on flat names.

This paper introduces a new routing protocol: Distributed Compact
Routing, or ``\disco''.  \disco builds on theoretical advances in
centralized, static compact routing algorithms~\cite{tz01,agmnt04} and
is the first {\em dynamic distributed} protocol to guarantee the following
properties:

\begin{description}

\item[Scalability:] A \disco router needs just $\tilde{O}(\sqrt{n})$
	routing table entries\footnote{The standard notation
	$\tilde{O}(\cdot)$ hides polylogarithmic factors to aid
	readability. \disco actually has $O(\sqrt{n \log n})$ entries.} regardless of network topology.
	
	\cut{or $\tilde{O}(\sqrt{n})$ {\em bits} of state under an assumption bounding
	the size of a partial source route, which we argue
	is a reasonable assumption in practice (\S\ref{sec:snip}).}

\item[Low stretch:] \disco has worst-case stretch $7$ on a
	flow's first packet, worst-case stretch $3$ on subsequent
	packets, and much lower average stretch.
\item[Flat names:] \disco routes on arbitrary ``flat'' names
	rather than hierarchical addresses.

\end{description}

\disco's guarantee of $\tilde{O}(\sqrt{n})$ routing table {\em entries}
translates to $\tilde{O}(r\sqrt{n})$ {\em bits} of state where $r$ is the size of a partial route; see discussion in \S\ref{sec:requirements}.

Recently, several papers have introduced techniques which approach the
above properties.  In particular, citing the lack of a distributed
compact routing protocol, VRR~\cite{caesar2006virtual} introduced a
novel DHT-inspired approach to route on flat names, but as we will see,
VRR does not guarantee scalability and low stretch.  S4~\cite{mwqls07} is a
distributed implementation of a compact routing algorithm
of~\cite{tz01}, but it does not route on flat names and as we will show,
it breaks the state bound of~\cite{tz01} causing high per-node state on
realistic topologies.  Distributed compact routing was listed as an open
problem by Gavoille~\cite{g06}.

We find, however, that guaranteed scalable, efficient routing on flat names is achievable. \disco is a careful synthesis of compact routing theory with well-understood systems techniques and a novel, low-overhead overlay network for disseminating routing state. That synthesis is the primary contribution of this paper. 
\disco thus represents a step towards closing the gap
between theoretical and applied work on scalable routing.

\cut{Nevertheless,
much work remains, including application to specific domains.  To spur further work
in this direction, we include a discussion of one application domain:
support for policy in interdomain environments.}

The rest of this paper proceeds as follows. Sec.~\ref{sec:requirements}
discusses the need for our three requirements of scalability,
low stretch, and flat names.  We describe how past work has fallen short
of these requirements in Sec.~\ref{sec:related}. 
Sec.~\ref{sec:protocol} presents our protocol and Sec.~\ref{sec:eval}
evaluates it. \cut{In Sec.~\ref{sec:policy}, we explore how \disco can
support operator control of routing policy.} We conclude with a discussion of future work in
Sec.~\ref{sec:conclusion}.


\section{Requirements}
\label{sec:requirements}

This paper is guided by three key requirements.

\parheading{Guaranteed scalability:} The routing protocol should use little space and messaging regardless of the network topology. In particular, we wish to reduce state requirements from the $\Omega(n)$ bits at each node used by traditional routing protocols in an $n$-node network, to something asymptotically smaller, e.g., $\tilde{O}(\sqrt{n})$ per node. Guaranteeing scalability on any topology helps ensure that the protocol will continue to function smoothly despite future growth, higher dynamics, and new environments.

\disco meets this requirement for graphs in which we have a bound on the size of ``explicit routes'' within the local vicinity of each node.  In particular, \disco has $\tilde{O}(\sqrt{n})$ routing table entries for a total of $\tilde{O}(r\sqrt{n})$ bits of state where $r$ is the maximum size of an explicit route, since these routes are embedded in nodes' addresses. (For example, we might have $r=O(\log n)$.) In \S\ref{sec:snip}, we discuss the need for this assumption and argue that it is reasonable. While in the worst case $r = \tilde{O}(\sqrt{n})$, in a router-level Internet map our addresses are shorter than IPv6 addresses.

\parheading{Guaranteed low stretch: } The protocol should find paths
that are close to the shortest possible.  Stretch is the ratio of the
protocol's route length to the shortest path length.  Since it is not
possible to route on shortest paths using $o(n)$ state per
node~\cite{fg95}, some stretch is unavoidable given our scaling goal. 
However, we desire stretch to be bounded by a small constant in the
worst case, so that local communication stays local and performance does
not degrade regardless of the traffic demands.

\parheading{Flat names: } The protocol should route on arbitrary node
names which may have no relationship with a node's location. In
particular, this service should operate while preserving the stretch
guarantee. \cut{Flat names stand in contrast to traditional {\em
hierarchical names} like IP addresses, which aid the routing system by
encoding location.}

Flat names are widely recognized as a useful primitive. The
location-independence of flat names aids mobility and eliminates the
management burden of location-based address assignment. Numerous
proposed redesigns of Internet routing use flat names to cleanly
separate location from identity, including TRIAD~\cite{gc01},
i3~\cite{stoica2002internet}, FARA~\cite{cbfp03},
HIP~\cite{jokela2004host}, and LISP~\cite{lisp}. Flat names can also be
self-certifying~\cite{mazieres1999separating,walfish2004untangling,koponen2007data},
where the name is a public key or a hash of a public key.  This
provides security without a public key infrastructure to associate keys
with objects.  Self-certifying names have been proposed for data objects
in content-centric networks~\cite{koponen2007data,sarela2008rtfm}, for a
persistent and contention-free replacement for
URLs~\cite{walfish2004untangling}, and for nodes in an accountable
Internet architecture~\cite{andersen2008accountable}.

If low stretch were not a goal, supporting flat names would be
straightforward. In particular, the routing system could resolve names
into addresses---using DNS, a consistent hashing database over a set of
well-known nodes~\cite{bvr,mwqls07,kcr08}, or a
DHT~\cite{ford08phd}---and then route over addresses.  But these
solutions violate our stretch requirement: the resolution step might
travel across the world even if the destination is next door.\footnote{Locality-aware
DHTs rely on existence of underlying routing infrastructure
and hence don't solve this problem. We discuss this issue in \S\ref{sec:related}.} Even
though the resolution step may  be needed for only the first packet of a
flow, this latency may dominate for short flows; indeed, sub-second
delays make a difference to users in web services such as
Google~\cite{brutlag09}.  Moreover, these solutions lack fate
sharing~\cite{clark1988design}: a failure far from the
source-destination path can disrupt communication.  Similarly, an
attacker far from the path could disrupt, redirect, or eavesdrop on
communication.

\medskip{}
Despite its desirability, scalable low-stretch routing on flat names has
remained elusive.  Indeed, satisfying all three requirements is an
algorithmically challenging goal which remained essentially unsolved
from the 1989 introduction of the problem~\cite{ablp89} until
2003~\cite{aclrt03}, even in the static centralized case.  And as we
shall see in the next section, no {\em distributed} solution has been
developed.


\section{Related Work}
\label{sec:related}

Scalable routing has a long and broad history, beginning with Kleinrock
and Kamoun's 1977 analysis of hierarchical
routing~\cite{kleinrock1977hierarchical} and branching into many
practical and theoretical application domains.  We limit our discussion
to routing protocols intended to scale for arbitrary topologies.

\parheading{Scalable routing in theory.} Universal compact routing schemes, beginning with early work by Peleg and Upfal~\cite{pu88}, bound the size of routing tables and the worst-case stretch on arbitrary undirected graphs. A series of results improved the state and stretch bounds, culminating in Thorup and Zwick~\cite{tz01} who obtained stretch $3$ and space $\tilde{O}(\sqrt{n})$ per node, along with a family of schemes that have stretch $k$ and space $\tilde{O}(k n^{2/(k+1)})$ for any odd integer $k \geq 1$. This is essentially optimal, since stretch $<3$ requires state $\Omega(n)$~\cite{gg01}, stretch $<5$ requires state $\Omega(\sqrt n)$, and in general, subject to a conjecture of Erd\H{o}s, stretch $<k$ requires state $\Omega(n^{2/(k-1)})$~\cite{tz01}.

	\cutatlastminute{
 These results are depicted in Fig.~\ref{fig:tradeoffs}.
	
	\begin{figure}
	\begin{centering}
	\includegraphics[scale=0.64]{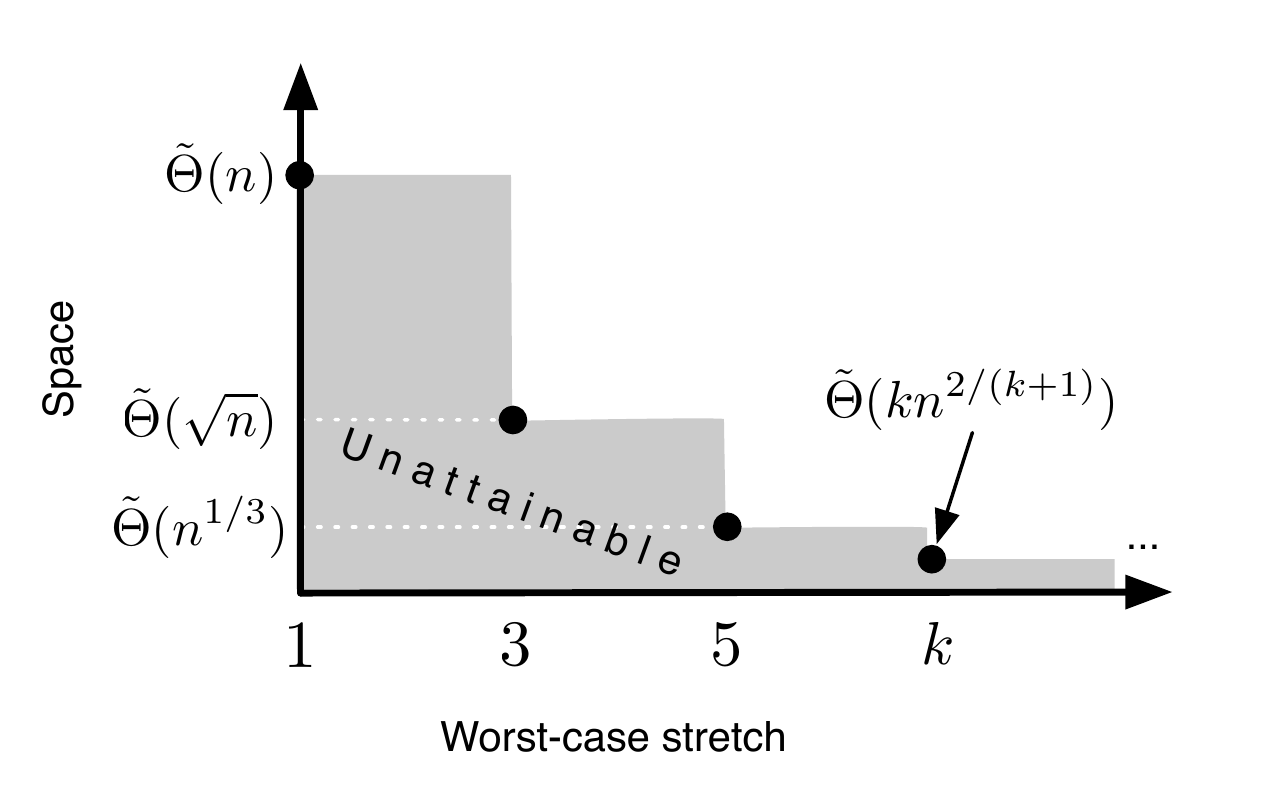}
	\par\end{centering}
	
	\caption{\label{fig:tradeoffs} Scaling in theory.  Shown are
		the best known compact routing schemes
		(in black) and lower bounds (in grey), in the name-dependent model.
		Name-independent results (discussed in the text) are similar.}
	\end{figure}
	
	}
	
Those optimal results, however, are for the {\bf name-dependent} model,
where node names are chosen by the routing algorithm to encode location
information---effectively, they are addresses. In the {\bf
name-independent} model, node names can be arbitrary (``flat'') names.
Abraham et al.~\cite{agmnt04} obtained stretch $3$ with space
$\tilde{O}(\sqrt{n})$, and Abraham, Gavoille, and
Malkhi~\cite{abraham2004routing} obtained stretch $O(k)$ with space
$\tilde{O}(n^{1/k}\log D)$ where $D$ is the normalized diameter of the
network.  Thus, surprisingly, the best name-independent schemes nearly
match the name-dependent ones.

Unfortunately, this theoretical work on compact routing is far from
practical for today's Internet.  Most critically, it assumes centralized
routing table construction and a static network.

	\begin{figure}
	\begin{centering}

{\small
\begin{minipage}{1.0\linewidth}
\centering
\begin{tabular}{l|c|c|c}
             &                & {\bf Low }    & {\bf Flat} \tabularnewline
{\bf Scheme} & {\bf Scalable} & {\bf stretch} & {\bf names} \tabularnewline
\hline
\hline 
Shortest-path routing & &  \ding{51} &  \ding{51} \tabularnewline
XL~\cite{levchenko2008xl}  & &  \ding{51} &  \ding{51} \tabularnewline
Classic hierarchy & \ding{51} &  & \tabularnewline
Landmark routing~\cite{t88} & &  & \tabularnewline
BVR~\cite{bvr} & \ding{51} &  & \tabularnewline
VRR~\cite{caesar2006virtual} &  &  & \tabularnewline
SEATTLE~\cite{kcr08}  & &  \ding{51} &  \tabularnewline
S4~\cite{mwqls07} &  & \ding{51} & \tabularnewline
Ford~\cite{ford08phd} & \ding{51} $^a$ & \ding{51} & \tabularnewline
{\em \disco} &  \ding{51} \footnote{for graphs where the size of explicit routes is bounded (\S\ref{sec:requirements})} &  \ding{51} &  \ding{51}\tabularnewline
\hline
\end{tabular}
\end{minipage}
	}

	\par\end{centering}
	
	\caption{\label{fig:comparison} Distributed routing protocols.
		{\em Scalable} indicates the protocol guarantees $o(n)$
		state per node; {\em low stretch} indicates it has $O(1)$
		or $O(\log n)$ stretch; {\em flat names} indicates it routes
		on flat names with low stretch. \cut{Note that \disco's (and \cite{ford08phd}'s)
		state guarantee assumes a bound on explicit route size (\S\ref{sec:requirements}).}}
	\end{figure}

\parheading{Scalable routing in practice.}  Since compact
routing algorithms are centralized and often relatively complex, it has
not been clear how to translate them into practical distributed
protocols; despite several attempts, results on the systems side
are limited.  Fig.~\ref{fig:comparison} summarizes some of the protocols
most closely related to \disco.

Landmark routing~\cite{t88} has a similar motivation to our work: small
space and stretch with a dynamic distributed protocol.  However,
landmark routing does not provide guarantees on either space or stretch
for general topologies, and does not route on flat names.

SEATTLE~\cite{kcr08} provides an Ethernet-like protocol while eliminating Ethernet's use of
broadcast.  It looks up Ethernet addresses in a consistent hashing
database run on the routers, and routes along shortest paths after the
first packet.  It therefore does not route on flat names with low
stretch.  SEATTLE improves scalability relative to Ethernet, but still
scales linearly in the number of routers $n$ (and resorts to an
Internet-like hierarchy for large $n$, which would cause unbounded
stretch). \cut{In comparison, \disco provides asymptotically better
scalability without resorting to hierarchy, and bounds stretch even for
the first packet.   As discussed in Sec.~\ref{sec:requirements}, this
improves latency and (via fate-sharing) reliability and security.}

XL~\cite{levchenko2008xl} uses heuristics to significantly improve link
state routing's messaging scalability, but does not improve the worst
case, or routing table size.

Virtual Ring Routing (VRR)~\cite{caesar2006virtual} and Routing on Flat
Labels (ROFL)~\cite{rofl} route on flat names by applying techniques
from DHTs to a physical network rather than in an overlay.  However,
these schemes have unbounded stretch on general topologies, and high
stretch in practice; e.g., an average stretch of up to $8$ in realistic
topologies~\cite{rofl}. We confirm this in §\ref{sec:eval} and also
show that VRR can have very high state for some nodes, even though it
has low state on average.

Beacon Vector Routing (BVR)~\cite{bvr} represents a node address as a
vector of distances to beacons and routes greedily with these vectors. 
While it greatly improves scalability, BVR can get stuck in local minima
causing high stretch, and requires a name resolution step handled by the
landmarks.

S4~\cite{mwqls07} adapts one of the Thorup and Zwick~\cite{tz01} compact
routing schemes to a distributed wireless setting.  However, their
adaptation breaks the per-node state bound of~\cite{tz01}; we show in
§\ref{sec:eval} that S4 can indeed have high per-node state.  Moreover,
it does not route on flat names with low stretch.

Ford~\cite{ford08phd} evaluated a distributed version of~\cite{tz01}
with $\Theta(\log n)$ stretch and state, but~\cite{ford08phd} does not
route on flat names with low stretch. Also, \disco chooses a different
point in the tradeoff space, with $O(1)$ stretch but
$\tilde{O}(\sqrt{n})$ state. Both \disco and~\cite{ford08phd} assume a bound on the size of
an explicit route (\S\ref{sec:requirements}).

	 \cut{ To the best of our
	knowledge, this is the only proposed distributed protocol which is both
	scalable and has low stretch, in this case with $O(\log n)$ state and
	stretch.}

Westphal and Kempf~\cite{westphal2008compact} applied static compact
routing algorithms to support mobility, but only with a fixed
infrastructure and mobile leaf nodes.

Krioukov et al.~\cite{krioukov2004compact,krioukov2007compact} applied
static compact routing algorithms to Internet-like topologies with
promising results, but did not develop distributed protocols.

\parheading{Related techniques in overlay networks.} Distributed hash
tables~\cite{pastry,zhao01tapestry,stoica01chord} may appear to satisfy
our goals, as they route on flat names often with $O(\log n)$ routing
table entries and within $O(\log n)$ overlay hops.  However, DHTs have
unbounded stretch: even a single overlay hop can be arbitrarily longer
than the distance to the destination!  More fundamentally, DHTs are not
general-purpose routing protocols: they are overlay networks which
depend on the availability of a general-purpose routing protocol beneath
them.


Tulip~\cite{abraham05practical} is an overlay network that routes with
$\tilde{O}(\sqrt{n})$ state and round-trip\footnote{Note this is a
different definition of stretch than the more common one-way definition
that we use in this paper.} stretch $2$ on flat names.  Both our work
and Tulip share theoretical techniques based on~\cite{agmnt04}.  However,
like DHTs, Tulip does not solve the network routing problem; its
stretch guarantee effectively assumes a stretch-$1$ routing protocol
beneath it. One could modify Tulip's data structures to store
routes instead of addresses to achieve routing functionality. \ankitr{The problem
then is constructing and maintaining this routing state in a \emph{distributed} and \emph{dynamic}
fashion. Locality aware DHTs~\cite{Abraham04land, 258523} also suffer from this problem. It is this
problem that \disco resolves.}

\parheading{In summary,} to the best of our knowledge, the only proposed
distributed protocol which is both scalable and has low stretch
is~\cite{ford08phd}, and no previously proposed practical protocol
retains guarantees on scalability and stretch while routing on flat
names.


\section{Distributed {C\kern-2pt o\kern-1pt m\kern-2pt p\kern-2pt a\kern-2pt c\kern-1pt t} Routing}
\label{sec:protocol}


This section begins with our assumptions and definitions (§\ref{sec:assumptions}). We then present \disco's main components: a {\em name-dependent} distributed compact routing protocol, \nddcr (§\ref{sec:snip}); a name resolution module~(§\ref{sec:nr}); and a distributed location database to achieve {\em name-independence} (§\ref{sec:nip}). Finally, we prove \disco's state and stretch guarantees~(§\ref{sec:guarantees}). We defer an evaluation of messaging overhead to our simulations (§\ref{sec:eval}).

\cut{
\subsection{Overview and contributions}
\label{sec:approach}

Although good theoretical compact routing {\em algorithms} are well known, translating these to distributed dynamic {\em protocols} has been difficult. Most compact routing algorithms involve nontrivial centralized computations, such as multiple passes over the graph to choose landmarks~\cite{tz01}, or routing via intermediate nodes chosen using non-local information about the topology~\cite{agmnt04}.

Our approach consists of two main steps. First, we design a name-dependent compact routing protocol, \nddcr. \nddcr is most similar to S4~\cite{mwqls07}, which is based on a compact routing algorithm of Thorup and Zwick (Sec.~3 of~\cite{tz01}). In~\cite{tz01}, nodes know a set of ``landmarks'', and each node $v$ knows its ``cluster'': nodes that are closer to $v$ than to their closest landmarks. To adapt~\cite{tz01} to a distributed setting, S4 selects uniform-random landmarks rather than using the multiple-pass algorithm of \cite{tz01}. Unfortunately, this breaks the per-node state guarantee; in \S\ref{sec:eval} we will see that S4's per-node state can be quite high. Briefly, the problem is that some nodes can be close to {\em many} nodes in the network, exploding their cluster size. (This is why~\cite{tz01} needed to select landmarks with a more complex algorithm.)

\nddcr avoids this problem.  We also choose uniform-random landmarks, but we replace clusters with ``vicinities'': each node simply knows the $\tilde{O}(\sqrt{n})$ nodes closest to it.  This makes it easy to bound the number of routing table entries at each node for any network.  A consequence of this rearrangement of routing state is that the source must query the destination to set up a stretch-$3$ path. The first packet of a flow thus experiences higher stretch, but it will still be bounded (and in any case, first-packet stretch will be dominated by the effort needed to covert a flat name to a location).  This design is essentially a distributed version of the ``handshaking-based'' scheme of Thorup and Zwick (Sec.~4 of~\cite{tz01}).

The second main step in \disco, which involved more significant novel protocol design, is to support low-stretch flat name routing. We adopt the idea of ``color''-grouping of nodes from~\cite{abraham2004routing}, but adapted with sloppy grouping and routing schemes which are more amenable to a distributed setting.  We then design an overlay network organized around these groups which can efficiently distribute flat-name routing state.
}

\subsection{Assumptions and definitions}
\label{sec:assumptions}

We assume we are given an undirected connected network of $n$ nodes with
arbitrary structure and link distances (i.e., link latencies or costs).  We let $v
\leadsto w$ denote a shortest path from $v$ to $w$ (in terms of
distance, not hopcount), and let $d(v,w)$ denote the length of this
path. The {\bf name} of a node is an arbitrary bit string; i.e., a flat,
location-independent name.  We assume each node knows its own name and
its neighbors' names, but nothing else. Like past schemes, our protocol
designates some nodes as {\bf landmarks}, whose function will be
described later.  We let $\ell_v$ denote the landmark closest to a node
$v$. 

An event occurs with high probability ({\bf w.h.p.}) if it has
probability $\geq 1 - n^{-c}$ for some constant $c \geq 1$.

We assume nodes can estimate $n$.  While this could be done in many
ways, we propose the use of synopsis diffusion~\cite{nath04synopsis}. SD
requires only extremely lightweight, unstructured gossiping of small
``synopses'' with neighbors and produces robust, accurate estimates
(e.g., within 10\% on average using 256-byte synopses).

\cut{

\parheading{Estimating the number of nodes.} The above protocol assumed
that all nodes know the number of nodes $n$.  Consider first the obvious
simple way to compute the number of nodes: build a tree, and have each
node (beginning with the leafs) report to its parent how many nodes are
in its subtree.  This, however, requires a tree construction protocol
and can be arbitrarily inaccurate in the presence of failures.  On the
other hand, if we add redundancy to the tree, one node might be counted
multiple times. 

Instead, \disco employs synopsis diffusion~\cite{nath04synopsis} which
provides an elegant solution to the problem of robustly estimating $n$. 
SD uses an order- and duplicate-insensitive sketching algorithm of
Flajolet and Martin~\cite{fm85} to produce an estimate of aggregates
like the sum of values across a network.  Our use of SD is for a special
case of this, counting the number of nodes $n$.  SD requires only
extremely lightweight, unstructured gossiping of small ``synopses'' with
neighbors, but results in a robust estimate that can be within an
arbitrarily good constant factor w.h.p. \fixme{is that true?} Although
SD is a one-shot algorithm, it is fairly easy to extend it to a
continuous process. \fixme{describe how}}

\subsection{Name-dependent compact routing}
\label{sec:snip}

\disco begins with a name-dependent distributed compact routing protocol, {\bf \nddcr}. This protocol guarantees worst-case stretch $5$ on the first packet of a flow, worst-case stretch $3$ on subsequent packets, and $\tilde{O}(\sqrt{n})$ routing table entries per node. But it is name-dependent: we assume the source knows the destination's current address (defined below), as opposed to just its name.  \nddcr is based on a centralized algorithm of \cite{tz01}.  In this subsection we describe the components of \nddcr, and then compare it with S4~\cite{mwqls07}, another distributed protocol based on~\cite{tz01}.

\parheading{Landmarks.} A {\bf landmark} is a node to which all nodes know shortest paths. Landmarks will allow us to construct end-to-end routes of the form $s \leadsto \ell \leadsto t$.  Each of the two segments will be precomputed by the routing protocol, and the full route will be close to the shortest if $\ell$ is close to $t$.

Landmarks are selected uniform-randomly by having each node decide locally and independently whether to become a landmark. Specifically, each node picks a random number $p$ uniform in $[0,1]$, and decides to become a landmark if $p < \sqrt{(\log n)/ n}$. Thus, the expected number of landmarks is $n \cdot \sqrt{(\log n)/ n} = \sqrt{n \log n}$ and by a Chernoff bound, there will be $\Theta(\sqrt{n \log n})$ w.h.p.

Since $n$ can change, nodes will dynamically become, or cease to be, landmarks. To minimize churn in the set of landmarks, a node $v$ only flips its landmark status if $n$ has changed by at least a factor $2$ since the last time $v$ changed its status. This amortizes the cost of landmark churn over the cost of a large number ($\Omega(n)$) of node joins or leaves.

\parheading{Vicinities.}  Landmarks gave us a way for a source $s$ to
approximately locate a destination $t$, but if $s$ and $t$ are close, it
will be a poor approximation relative to the distance between them. To
solve this problem, each node $v$ learns shortest paths to every
node in its {\bf vicinity} $V(v)$: the $\Theta(\sqrt{n\log n})$ nodes
closest to $v$.  These sizes ensure that each node has a landmark within
its vicinity w.h.p., which is necessary for the stretch guarantee.

\parheading{Learning paths to landmarks and vicinities.}  Nodes learn
shortest paths to landmarks and vicinities via a single, standard path
vector routing protocol. When learning paths, a route announcement is
accepted into $v$'s routing table if and only if the route's destination
is a landmark or one of the $\Theta(\sqrt{n\log n})$ closest nodes
currently advertised to $v$. The entire routing table is then exported
to $v$'s neighbors.

With these rules, the protocol will converge so that $v$ knows the
landmarks and $V(v)$, for a total of $\Theta(\sqrt{n\log n})$ routing
table entries w.h.p.  We note however that the control plane requires
$\Theta(\delta\sqrt{n\log n})$ state for a node with degree $\delta$,
because path vector stores the full set of route advertisements received
from each neighbor.   This may be acceptable for low-degree graphs or if
routers with higher degree are more well-provisioned. We can reduce
control state to $\Theta(\sqrt{n\log n})$ by either forgetting the
unused announcements as in Forgetful
Routing~\cite{karpilovsky06forgetful}, or by preventing them from being
sent by having each node inform its neighbors of the ``radius'' of its
vicinity.

\parheading{Addresses.} The address of node $v$ is the identifier of its
closest landmark $\ell_v$, paired with the necessary information to
forward along $\ell_v \leadsto v$. Addresses are location-dependent, of course,
but they are only used internally by the protocol, and are dynamically
updated to reflect topology changes.

But what is ``the necessary information to forward along $\ell_v
\leadsto v$''?  In the version of \nddcr evaluated here, it is an {\em explicit
route} consisting of a list of labels, one for each hop along the $\ell_v \leadsto v$ path.
A node's address is thus variable length
with size dependent on the number of hops to its nearest landmark.  The
reader may notice that in the worst case this address size is quite
large, as much as $\tilde{O}(\sqrt{n})$ bits in a ring network of $n$
nodes. This would be too large for a packet header and would explode our
state bound since we will have to store many addresses on some nodes
(§\ref{sec:nr}).  Our design decision to use explicit routes thus invites
some explanation.

Most importantly, the explicit route is in practice extremely compact.
Each hop at a node of degree $d$ is encoded in $O(\log d)$ bits
following the format of~\cite{godfrey09pathlet}. \ankitr{We measured 
the size of explicit routes in CAIDA's
router-level map of the Internet~\cite{caida-router} by picking random
landmarks and encoding shortest paths from each node to its closest
landmark as a sequence of these $O(\log d)$-bit encodings of the node identifiers
on the path.} The maximum size
of our addresses is just $10.625$ bytes (less than an IPv6 address), the
95th percentile is $5$ bytes, and the mean---the important metric for
the per-node state bound---is $2.93$ bytes (less than an IPv4 address).

The explicit route {\em could} be eliminated.  Briefly, an address
would be fixed at $O(\log n)$ bits; each landmark $\ell$ would
dynamically partition this block of addresses among its neighbors in proportion to their number of descendants, and
this would continue recursively down the shortest-path tree rooted at
$\ell$, analogous to a hierarchical assignment of IP addresses.  Since
this would complicate the protocol and actually {\em increase} the mean
address size in practice, we chose the simpler explicit route design.

\parheading{Routing.}  A source $s$ can now send to a destination $t$ as
follows.  If $t$ is a landmark or $t\in V(s)$, then $s$ can route along
a shortest path to $t$.  Otherwise, it extracts $\ell_t$ from $t$'s
address and routes along $s \leadsto \ell_t \leadsto t$.   (Recall that
for \nddcr, unlike \disco, we assume that $s$ knows $t$'s address.) This
first packet of the flow has worst-case stretch $5$, a fact which was
shown in~\cite{tz01} for a centralized algorithm, and which applies to
our protocol since it is essentially a distributed implementation
of~\cite{tz01}.

Subsequent packets can do better. Upon receipt of the message, $t$
determines whether $s\in V(t)$.  If so, $t$ knows the shortest path even
though $s$ didn't (note that $s \in V(t)$ does {\em not} imply $t \in
V(s)$). In this case, $t$ informs $s$ of the path $s\leadsto t$, and all
subsequent packets follow this path.  This is again essentially
equivalent to the protocol of~\cite{tz01} which~\cite{tz01} showed
guarantees worst-case stretch $3$.

\parheading{Shortcutting heuristics.} In S4~\cite{mwqls07}, if at any
point the packet passes through a node which knows a direct path to $t$,
then the direct path is followed. We refer to this as ``To-Destination''
shortcutting. We implement two further optimizations.  First, we try
both the forward and reverse routes $s\to t$ and $t \to s$, and use the
shorter of these. When combined with To-Destination, we call this
``No Path Knowledge'' shortcutting.

Second, a more aggressive optimization is for every node along the route
to inspect the route and see whether it knows a shorter path to any of
the nodes along the route (via its vicinity routes) in either forward or
reverse directions, in which case the route is shortened. We call this ``Up-Down
Stream'' shortcutting. The latter
optimization requires listing the global identifiers of every node along
the path.  This can be done on a single initial packet.  This can also
be combined with using the reverse route (referred to as ``Path Knowledge'' then). 
Our simulations will show that this ``Path Knowledge'' optimization significantly
reduces stretch, but due to the added complexity, we exclude it from the
evaluation of our core protocol. All results discussed subsequently use
the ``No Path Knowledge'' optimization.

\cut{
Table.~\ref{fig:sc-strategies} shows the relative effect of different
shortcutting heuristics.
}
\cut{We implement a further optimization: every
node in the path checks if it has a shorter path to any node downstream
on the path.  Note this is only possible if source routes list global
identifiers rather than a compact representation (see discussion in
Sec.~\ref{sec:assumptions}).}

\parheading{Comparison with S4.}
Both \nddcr and S4~\cite{mwqls07} are distributed protocols based on~\cite{tz01}.  Specifically, S4 is based on the algorithm in Sec.~3 of~\cite{tz01} where, rather than knowing vicinities as described above, each node $v$ knows its ``cluster'': nodes that are closer to $v$ than to their closest landmarks. To adapt~\cite{tz01} to a distributed setting, S4 selects uniform-random landmarks rather than using the multiple-pass algorithm of \cite{tz01}. Unfortunately, this breaks the per-node state guarantee; in \S\ref{sec:eval} we will see that S4's per-node state can be quite high. Briefly, the problem is that some nodes can be close to {\em many} nodes in the network, exploding their cluster size. (This is why~\cite{tz01} needed to select landmarks with a more complex algorithm.)

\nddcr avoids this problem by having each node $v$ store its vicinity (as defined above: the $\tilde{O}(\sqrt{n})$ nodes closest to $v$) rather than its cluster.  This enforces a bound on the number of routing table entries at each node for any network, but does have two consequences.  First, it requires the source to query the destination's vicinity in order to guarantee stretch $3$, as described above; this design is essentially a distributed version of the ``handshaking-based'' scheme of Thorup and Zwick (Sec.~4 of~\cite{tz01}).

Second, this design leads to our use of explicit routes in addresses.  S4 ensures that for any node $v$ and its closest landmark $\ell_v$, node $v$ will appear in $\ell_v$'s cluster. But in \nddcr, it is possible that $v \not\in V(\ell_v)$.  Thus, we include the explicit route $\ell_v \leadsto v$ in $v$'s address, so packets sent to $\ell_v$ can be forwarded to $v$.

\ankitr{S4 and \nddcr do not route on flat names. In the rest of this section, we build \disco, which adds routing on flat names on top of \nddcr.}

\cut{
\fixme{I think we could cut the following paragraph: 
More minor differences from S4 include the use of synopsis diffusion to estimate $n$ (which was not handled in S4) and the use of path vector (rather than two routing protocols SDV and RBDV in~S4).}
}

\subsection{Name resolution}
\label{sec:nr}

\nddcr (§\ref{sec:snip}) requires a sender to know the destination's
current address.  We can solve this by running a consistent
hashing~\cite{karger97consistent} database over the (globally-known) set
of landmarks, similar to~\cite{bvr,mwqls07}.  Every node is aware of its
own address $(\ell_v, \ell_v \leadsto v)$, so it can insert it into the
database, and other nodes can query the database to determine $v$'s
address.   This state is soft: it can be updated, for example, every $t$
minutes and timed out after $2t+1$ minutes.  In our simulator, $t=10$. 
We will show (§\ref{sec:guarantees}) that this adds $O(\sqrt{n \log n})$
entries to each landmark, so the state bound is preserved.

However, this is only a partial solution, because the first packet of a
flow may have arbitrarily high stretch. As discussed in
Sec.~\ref{sec:requirements}, this impacts latency, reliability, and
security. We will, however, use this name-resolution database as a
component of \disco in the next section.

\subsection{Name-independent compact routing}
\label{sec:nip}

\disco is comprised of \nddcr (§\ref{sec:snip}), name resolution
(§\ref{sec:nr}), and a distributed name database.  The last step ensures
constant stretch while routing on flat names, and is what we describe in
this section.

\parheading{Overview.} We build a distributed database which maps nodes'
names to their addresses.  The database requires the key properties that
(1) any necessary query can be accomplished with low stretch, (2) the
$\tilde{O}(\sqrt{n})$ per-node state bound is not violated, and (3)
maintaining the state in the face of network dynamics requires little
messaging. We adopt the idea of ``color''-grouping of nodes from~\cite{abraham2004routing}, but adapted with sloppy grouping and routing schemes which are more amenable to a distributed setting.  We then design a random overlay network organized around these groups which can efficiently distribute flat-name routing state.

\parheading{State.}  We begin with a well-known hash function $h(v)$
(e.g., SHA-2) which maps the node name to a roughly
uniformly-distributed string of $\Theta(\log n)$ bits.  Node $v$ is a
member of a ``sloppy group'' of nodes that have in common the first few
bits of $h(v)$.  Specifically, let $G(v)$ be set of nodes $w$ for which
the first $k:=\lfloor \log_2 (\sqrt{n/\log n}) \rfloor$ bits of $h(w)$
match those of $h(v)$.  Thus, $G(v)$ will contain $O(\sqrt{n \log n})$
nodes w.h.p.  \cut{(Here $c$ is a constant that could be optimized.)}
The group is sloppy because this definition depends on $v$'s estimate of
$n$, which will vary slightly across nodes.  Node $v$ then ensures that
every node in $G(v)$ stores $v$'s address.  We will return to how
exactly this is done shortly.

This definition of the sloppy group $G(v)$ has two important
properties.  First, it is consistent in the sense of consistent
hashing~\cite{karger97consistent}: a small change in the number of nodes
does not result in a large change in the grouping.  This is essential to
scalably handle dynamics.  More precisely, there will be no change in
$k$ and hence no change in the grouping unless $n$ changes by a constant
factor.\footnote{If $n$ is such that $k$ is near the boundary of two
values, we can avoid the frequent ``flapping'' that could result by
changing the sloppy group only when the estimate of $n$ changes by at
least some constant factor (e.g., 10\%).}

The second important property is that if the grouping {\em does} change,
it corresponds to splitting a group in half or merging two groups.  This
means that nodes that have slightly different opinions about the value
of $n$ will still roughly agree on the grouping, which helps us handle
(constant-factor) error in estimating $n$.

We show how to maintain the sloppy group state shortly; but first, we
describe how to use it.

\parheading{Routing.}  To route from $s$ to $t$, node $s$ first checks
(as in \nddcr) whether it knows a direct path to $t$, either because $t$
is a landmark or $t\in V(s)$. If so, it routes directly.

Otherwise, if $s$ knows $t$'s address (i.e., because $s \in G(t)$), it
can route to $t$ according to \nddcr.

Otherwise, $s$ locally computes $h(t)$. It then examines its vicinity
and finds the node $w\in V(s)$ which has the {\em longest prefix match}
between $h(w)$ and $h(t)$. ~\cut{includes the case where s itself
matches $t$'s hash, but do we want to mention this separately?} (This can
be optimized slightly to be the closest node $w$ with a ``long enough''
prefix match.)  Due to our choice of $k$, with
high probability, $w\in G(t)$, so $w$ will know $t$'s current address
$(\ell_t, \ell_t \leadsto t)$. The full path (if no shortcutting occurs)
is thus $s \leadsto w \leadsto \ell_t \leadsto t$.  We show that this
has stretch $\leq 7$ in §\ref{sec:guarantees}.

After the first packet, $s$ knows $t$'s address, so routing proceeds as
in \nddcr with stretch $\leq 3$.

Note that there is a vanishingly small but nonzero probability that
$t$'s address is not found because our state is maintained correctly
only with high probability.  Such events never occurred in our
simulations.  However, if this did occur routing could operate correctly by simply using name resolution on the landmark database (\S\ref{sec:nr}) as a fallback.

\parheading{Sloppy group maintenance.}  We now describe how a node $v$
can ensure that the nodes $G(v)$ have $v$'s address.  (Note that $v$
does not know which nodes these are.) 

A na\"{\i}ve solution is to use the consistent hashing-based name
resolution database running on the landmarks~(\S\ref{sec:snip}).  Each
node $v$ already stores its name and address at the landmark which owns
the key $h(v)$. It is easy to see that all of $G(v)$ will be stored on a
predictable set of $O(\log n)$ landmarks, from which any node can
therefore download its group membership information.  This, however,
imposes a high messaging burden on the landmarks: if every node
changes its address once per minute, the landmark would have to relay
$\tilde{O}(\sqrt{n})$ addresses to each of $\tilde{O}(\sqrt{n})$ nodes
for a total of $\tilde{O}(n)$ bytes per minute.

\disco instead adopts a more decentralized solution. Each node $v$ maintains 
a set of {\bf overlay neighbors} $N(v)$. Similar to a DHT structure, $N(v)$ 
includes $v$'s successor and predecessor in the circular ordering of nodes 
according to their hash values $h(\cdot)$.  $N(v)$ also includes a small 
number of long-distance links called ``fingers''.  

To select a finger, a node $v$ picks a random hash-value $a$ from the \ankitr{part of hash-space that falls within $G(v)$.} Following~\cite{symphony}, $a$ is picked such that the likelihood of picking a value is inversely proportional to its distance in hash-space from $h(v)$. Based on these rules, a node finds the name and address of a finger by querying the landmark-based resolution database for the node with the closest hash-value to $a$. In this manner, node $v$ picks a small constant number of fingers (we will test $1$ and $3$ in our simulations) refreshing the set $N(v)$ periodically at a low rate. It then opens and maintains TCP connections to each of these nodes, for an average of $|N(v)| \approx 4$ or $8$ overlay connections (for $1$ or $3$ fingers respectively) counting both outgoing and incoming connections.

Within this overlay, we can efficiently disseminate routing state in a manner very close to a distance vector (DV) routing protocol. We describe the protocol through its differences from the standard DV protocol. First, we emphasize that we use this protocol only to {\em propagate address information}, rather than to find routes. Second, since we are only interested in disseminating addresses rather than finding short paths, we need not include a distance in the announcement, but we (obviously) must include the originating node's name and address. \ankitr{Third, nodes only propagate advertisements to and from nodes they believe belong to their own group, thus keeping address information local to each group.} Fourth and most importantly, node $v$ propagates advertisements only to those nodes in $N(v) \cap G(v)$ which would cause the message to continue in the same direction: that is, announcements received from an overlay neighbor with higher hash-value are propagated only to neighbors with lower hash-values, and vice-versa. This eliminates distance vector's count-to-infinity problem because as announcements are propagated, their distance (in hash space) from the origin of the announcement strictly increases. Other aspects of the protocol (incremental updates, state maintained, etc.) are similar to DV.

Why does this design work? First note that although nodes differ in their opinion on the sloppy grouping, they don't differ substantially. In particular, since we can ensure that estimates of $n$ are within a factor of $2$ of the correct value w.h.p., nodes will differ by at most {\em one bit} in the number of bits $k$ that they match to determine the grouping. Thus, there will be a ``core group'' $G'(v)$ such that all nodes in $G'(v)$ will agree that they are in the same group. $G'(v)$ is clearly connected via successor/predecessor links, so $v$'s announcement will reach all the nodes in $G'(v)$. Since $|G'(v)|=\Theta(\sqrt{n \log n})$ a Chernoff bound can be used to show that every node $s$ will have some member of $G'(v)$ in its vicinity w.h.p. The routing protocol above will then find such a node with its prefix-matching step, so $s$ will be able to route to $v$.

Beyond {\em correctness}, the design is {\em efficient} in two ways.  First, the overlay has constant average degree, so each node will receive only a few copies of each announcement message.  Second, these messages are propagated relatively quickly, as the expected path length in the Symphony topology is $O(\log^2 n)$\footnote{Since we are gossiping on many paths rather than routing along a greedy path, we conjecture that this bound can be improved to $O(\log n)$ along the lines of~\cite{Manku04knowthy}.}~\cite{symphony}.

We briefly clarify two minor points.  First, there might be conflicting announcements received by $v$ for some node $x$'s address; in this case, $v$ can simply use the announcement which was received from the node whose hash-value is closest to $h(x)$.  Second, during convergence and maintenance of the overlay, $v$ may have all its announcements for some node $x$ temporarily withdrawn even though $x$ is still live. To provide reliable service during these periods, $v$ can delay removal of address state until some short period (e.g., $30$ sec) has passed.

\cut{
\fixme{Cut the following paragraph?  Just remove it if you agree.}
Note that a naive design would be to eschew the random overlay
neighbors, and instead have each node gossip announcements to the nodes
within its vicinity that are in its group $G(v)$, of which there are
$\Theta(\log n)$ in expectation.  However, the resulting overlay
topology is not necessarily connected.

\parheading{An alternate design.}  The above random graph-based
dissemination protocol can be somewhat inefficient in terms of messaging
for two reasons. First, the use of soft state necessitates periodic
refresh of the disseminated addresses.  This can be eliminated by
running a path vector-style protocol for disseminating state within the
overlay network, similar to~\cite{godfrey09pathlet}.  Second, each node
receives each location announcement $\Theta(\log n)$ times---once per
overlay neighbor. This can be reduced to $O(1)$ by maintaining a
constant-degree overlay, in which each node $v$ has overlay connections
to its successor and predecessor (i.e., the nodes whose names hash to
the values most closely larger and smaller than $h(v)$) plus one random
neighbor within its group.  In this paper, we evaluate only the original
soft-state design for simplicity, and due to space constraints we eschew
a complete description.  Note that these optimizations would affect only
messaging, rather than state and stretch.}

\cut{; and second, each node receives each location announcement
$\Theta(\log n)$ times. In Appendix~\ref{sec:alternate}, we present an
alternate dissemination protocol which solves these problems by using a
multicast-like protocol within a distributed hash table.\footnote{Note
that it is possible to run a DHT, which assumes any-to-any connectivity,
because \nddcr provides that service.  That is, we can use a standard
DHT and don't need the complexity of an approach like
VRR~\cite{caesar2006virtual}.}  On the other hand, the DHT is a somewhat
more complex structure which requires end-to-end probing to maintain
consistency under node churn. Our choice of the random graph design,
which is implemented in our simulator, represents a preference for
simplicity over efficiency. However, this choice is not fundamental and
we could swap a DHT for the gossip protocols without affecting our state
and stretch guarantees.}


\subsection{Guarantees}
\label{sec:guarantees}

We next prove that \disco maintains low stretch and state. We
do not bound the messaging overhead analytically, but we will simulate
it in §\ref{sec:eval}.  

\parheading{Stretch.} As previously noted, the results of~\cite{tz01}
apply to our name dependent protocol \nddcr: it has stretch $\leq 5$ for
the first packet of a flow, and $\leq 3$ for subsequent packets.

Our name-independent routing, however, lengthens paths further.  We now
show that it still maintains stretch $\leq 7$ for the first packet.

\begin{theorem}
After converging, \disco routes the first packet of each flow with
stretch $\leq 7$, and subsequent packets with stretch $3$ w.h.p.
\end{theorem}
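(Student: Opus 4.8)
The plan is to reduce the theorem to the already-established guarantees of \nddcr and then bound only the extra path length that the name-independent lookup introduces. Recall that \nddcr gives first-packet stretch $\leq 5$ and subsequent-packet stretch $\leq 3$. Since after the first packet $s$ has learned $t$'s address $(\ell_t, \ell_t \leadsto t)$ and thereafter routes exactly as in \nddcr, the subsequent-packet claim is immediate. So essentially all the work is in establishing the first-packet bound of $7$.

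First I would isolate the only nontrivial routing case. If $t$ is a landmark or $t \in V(s)$, the first packet already follows a shortest path (stretch $1$); if $s \in G(t)$, then $s$ already knows $t$'s address and inherits \nddcr's stretch $\leq 5$. The remaining case is $t \notin V(s)$, $t$ not a landmark, and $s \notin G(t)$, where the packet travels $s \leadsto w \leadsto \ell_t \leadsto t$ for the longest-prefix-match node $w \in V(s)$. I would first argue, as in the correctness discussion, that w.h.p. $w \in G(t)$ so that $w$ indeed holds $t$'s address; this rests on a Chernoff bound showing that a member of the core group $G'(t)$ lies in $V(s)$, combined with the choice of the prefix length $k$.

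The heart of the argument is the metric bound. I would write the route length as $d(s,w) + d(w,\ell_t) + d(\ell_t,t)$ and apply the triangle inequality twice: $d(w,\ell_t) \leq d(w,t) + d(t,\ell_t)$ and $d(w,t) \leq d(s,w) + d(s,t)$, giving a total of at most $2\,d(s,w) + d(s,t) + 2\,d(t,\ell_t)$. Two facts, both consequences of $t \notin V(s)$, then finish it. Since $w \in V(s)$ and $t \notin V(s)$, the node $w$ is among $s$'s closest nodes while $t$ is not, so $d(s,w) \leq d(s,t)$. And since $V(s)$ contains a landmark w.h.p., we have $d(s,\ell_s) \leq \max_{v \in V(s)} d(s,v) \leq d(s,t)$, whence $d(t,\ell_t) \leq d(t,\ell_s) \leq d(t,s) + d(s,\ell_s) \leq 2\,d(s,t)$. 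Substituting yields a route length of at most $2\,d(s,t) + d(s,t) + 4\,d(s,t) = 7\,d(s,t)$, as claimed.

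Finally I would collect the high-probability events used above --- a landmark in each node's vicinity, a core-group member of each group in each node's vicinity, and correct sloppy-group state --- and union-bound them over all nodes and pairs, which the $\Theta(\sqrt{n\log n})$ vicinity and group sizes make go through. The main obstacle I anticipate is not the metric arithmetic, which is routine once the two bounds on $d(s,w)$ and $d(t,\ell_t)$ are in place, but cleanly establishing $w \in G(t)$: I must account for the sloppiness of the grouping (nodes disagreeing by at most one bit in $k$) and guarantee that the prefix-matching rule selects a node that \emph{actually stores} $t$'s address rather than merely one sharing a long prefix with $h(t)$. For this I would lean on the core-group argument already sketched, which shows $v$'s announcements reach all of $G'(v)$ and that some member of $G'(t)$ falls in $V(s)$.
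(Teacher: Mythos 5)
Your proposal is correct and follows essentially the same argument as the paper: the same case analysis, the same decomposition of the route into $s \leadsto w \leadsto \ell_t \leadsto t$, and the same two key metric facts --- $d(s,w) \leq d(s,t)$ from $w \in V(s)$, $t \notin V(s)$, and $d(t,\ell_t) \leq d(t,\ell_s) \leq d(s,t) + d(s,\ell_s) \leq 2\,d(s,t)$ via a landmark in $V(s)$ w.h.p. (the paper's ``Useful Fact'') --- with only a cosmetic difference in how the triangle inequalities are chained before summing to $7\,d(s,t)$.
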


\begin{proof}  Packets after the first have stretch $3$ as shown
in~\cite{tz01}.  For the first packet, there are several special cases
that can result in lower stretch---if the source $s$ knows $t$'s
address, $t\in V(s)$, $t$ is a landmark, or shortcutting occurs; these cases
are easy to deal with and we omit a discussion.
There is also a case that no appropriate $w$ is found and stretch might be higher
than $7$; w.h.p., this
does not occur.
In
the general case, the full path is $s \leadsto w \leadsto \ell_t
\leadsto t$, where $w\in V(s)$, $\ell_t$ is the landmark closest to $t$,
	and each segment is a shortest path. We first show a Useful Fact:
	\begin{eqnarray*}
	d(\ell_t, t) & =    & d(t, \ell_t) \textrm{~~~(graph is undirected)} \\
			    & \leq & d(t, \ell_s)  \textrm{~~~(since $\ell_t$ is $t$'s closest landmark)} \\
			    & \leq & d(t,s) + d(s, \ell_s) \textrm{~~~(triangle inequality)} \\
			    & =    & d(s,t) + d(s, \ell_s)  \textrm{~~~(graph is undirected)} \\
			    & \leq & d(s,t) + d(s,t)  \textrm{~~~($\ell_s$ is in $V(s)$ and $t$ isn't)} \\
			    & =  & 2d(s,t).
			    \end{eqnarray*}
			    Note the next-to-last step ($\ell_s \in V(s)$) holds w.h.p. over the random choice of landmarks.

			    We can now upper-bound the length of each segment of the full path.
			    For the first segment $s \leadsto w$, we have
			    $d(s,w)\leq d(s,t)$ since $w$ is within $s$'s vicinity and $t$ is not.
			    For the second segment $w \leadsto \ell_t$, we have
			    \begin{eqnarray*}
			    d(w,\ell_t) & \leq & d(w,s) + d(s,t) + d(t,\ell_t)  \textrm{~~~(triangle ineq.)} \\
				    & \leq & d(s,t) + d(s,t) + d(t,\ell_t)  \\
				    & \leq & d(s,t) + d(s,t) + 2d(s,t)  \textrm{~~~(Useful Fact)} \\
				    & = & 4d(s,t).
				    \end{eqnarray*} 
				    For the third segment $\ell_t \leadsto t$, we have $d(\ell_t, t) \leq
				    2d(s,t)$, again by the Useful Fact. Adding up the three segments, we
				    have that the length of the route is $\leq 7d(s,t)$. \end{proof}

\parheading{State.}  For convenience, we analyze state in terms of the
number of entries in the protocol's routing tables.  Each entry may
contain a node name and address, which in turn contains a landmark name
and an explicit route from the landmark to the destination. (Recall from
\S\ref{sec:assumptions} that the explicit route will in practice occupy
just a few bytes, and for extreme cases a variant of our design can
ensure $O(\log n)$-size addresses for any topology.)

\begin{theorem} After converging, with high probability, each \disco
node of degree $\delta$ maintains $O(\delta \sqrt{n \log n})$ entries in
its routing tables including the control and data planes, or $O(\sqrt{n
\log n})$ using forgetful routing. \end{theorem}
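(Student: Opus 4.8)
The plan is to decompose a node's total state into the distinct components contributed by the three pieces of \disco, bound each component separately by $O(\sqrt{n\log n})$ (or $O(\delta\sqrt{n\log n})$ for the single control-plane term that scales with degree), and then sum, taking a union bound over the polynomially many high-probability events invoked. Concretely, a \disco node $v$ stores: (i) its vicinity $V(v)$ together with routes to all landmarks (the \nddcr data plane); (ii) the raw path-vector advertisements received from each of its $\delta$ neighbors (the \nddcr control plane); (iii) if $v$ is a landmark, the portion of the consistent-hashing name-resolution database that $v$ owns; (iv) the addresses of the members of its sloppy group; and (v) its overlay neighbor set $N(v)$. Since the total is a sum of a constant number of terms, it suffices to bound each and add.

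First I would dispose of the \nddcr state. The vicinity $V(v)$ has size $\Theta(\sqrt{n\log n})$ by definition, and the number of landmarks is $\Theta(\sqrt{n\log n})$ w.h.p. by the Chernoff bound established in \S\ref{sec:snip}; each contributes one entry, so (i) is $O(\sqrt{n\log n})$. For the control plane (ii), path vector retains, for each of the node's $\delta$ neighbors, the full set of route announcements that neighbor exported, and each such set has size $O(\sqrt{n\log n})$ by the same bound; hence (ii) is $O(\delta\sqrt{n\log n})$. This single term is the sole source of the $\delta$ factor in the theorem, and it is removed by forgetful routing (or by having each node advertise its vicinity radius so unused announcements are never sent), yielding $O(\sqrt{n\log n})$.

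Next I would bound the sloppy-group state (iv). Each node lands in a fixed group independently with probability $2^{-k}$ for $k=\lfloor\log_2\sqrt{n/\log n}\rfloor$, so in expectation $|G(v)|=\Theta(\sqrt{n\log n})$, and a Chernoff bound gives $|G(v)|=O(\sqrt{n\log n})$ w.h.p. Because estimates of $n$ agree within a factor $2$, the $k$ values held by different nodes differ by at most one bit, so the set of addresses $v$ actually receives and stores lies within a constant factor of $|G(v)|$, still $O(\sqrt{n\log n})$. The overlay neighbor set (v) is $O(1)$ by construction, hence negligible.

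The step I expect to require the most care is the name-resolution database (iii), and I would treat it as the crux. Here the $n$ node addresses are placed by the uniform hash $h(\cdot)$ onto the ring owned by the $m=\Theta(\sqrt{n\log n})$ landmarks, and I must bound the \emph{maximum} load over landmarks, not merely the average; a single landmark could own an unusually long arc of the ring. Invoking the standard consistent-hashing spacing bound, the largest arc spans an $O((\log m)/m)$ fraction of the ring w.h.p., so (using that the expected $\Theta(\sqrt{n\log n})$ keys in such an arc is $\omega(\log n)$ and hence concentrates) the maximum load is $O\!\left((n/m)\log m\right)$. Substituting $m=\Theta(\sqrt{n\log n})$ and $\log m=\Theta(\log n)$ gives $O(\sqrt{n/\log n}\cdot\log n)=O(\sqrt{n\log n})$. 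This bound is essentially tight: the $\log m=\Theta(\log n)$ over-allocation factor exactly consumes the factor-$\log n$ gap between the average load $\sqrt{n/\log n}$ and the target $\sqrt{n\log n}$, so the argument works precisely because the landmark count already carries the extra logarithmic factor. Assuming the usual $O(\log m)$ virtual ring-points per landmark would give cleaner concentration, but the single-point version already suffices. Finally, summing (i)--(v) and union-bounding over the $O(n)$ high-probability events (the landmark-count, vicinity, group-size, and arc-length claims) yields $O(\delta\sqrt{n\log n})$ in general, and $O(\sqrt{n\log n})$ with forgetful routing.
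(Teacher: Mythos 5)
Your decomposition, and the arguments you give for each piece, track the paper's own proof almost step for step: the same split into landmark/vicinity path-vector state (data plane $O(\sqrt{n\log n})$, control plane $O(\delta\sqrt{n\log n})$, collapsed by forgetful routing), the same Chernoff bounds on the landmark count and the sloppy-group size, and the same consistent-hashing load argument for the name-resolution database. On that last point you are actually \emph{more} careful than the paper: the paper simply asserts that a single-hash-function consistent hashing scheme gives each landmark at most a $\Theta(\log n)$ factor over its fair share, whereas you derive it from the maximum-arc bound $O((\log m)/m)$ and check that the expected load in such an arc is $\omega(\log n)$ so it concentrates; your observation that the $\log m$ over-allocation factor is exactly absorbed by the extra $\sqrt{\log n}$ in the landmark count is the right way to see why the bound goes through. (A minor cosmetic mismatch: the paper's proof budgets $O(\log n)$ overlay neighbors where you say $O(1)$; both are consistent with the protocol description and both are negligible.)

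There is, however, one component of state that your inventory misses and that the paper explicitly accounts for: the forwarding tables that map compact source-route \emph{labels} to outgoing interfaces, needed because \disco addresses embed explicit routes encoded as per-hop labels (\S\ref{sec:snip}). Naively this table has one entry per neighbor, i.e.\ $\Theta(\delta)$ entries. That is harmless for the $O(\delta\sqrt{n\log n})$ claim, but it is \emph{not} removed by forgetful routing --- forgetful routing only discards unused path-vector announcements --- so for a node with $\delta \gg \sqrt{n\log n}$ your claimed $O(\sqrt{n\log n})$ bound would fail as written. The paper closes this hole with a separate observation: a node only needs label mappings for interfaces that actually lie on its stored shortest paths to landmarks and vicinity members, and there are at most $O(\sqrt{n\log n})$ of those. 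Adding this one paragraph to your proof would make it complete; everything else you wrote stands.
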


\begin{proof} The protocol maintains several kinds of state. First, each
node of degree $\delta$ stores path vector routing state for the
landmarks. Since each node becomes a landmark independently with
probability $\sqrt{(\log n)/ n}$, there are $\Theta(\sqrt{n \log n})$
landmarks in expectation and, by a Chernoff bound, with high probability
(w.h.p.).   Each of a node's $\delta$ neighbors sends it $O(\sqrt{n \log
n})$ landmark route announcements, for a total of $O(\delta\sqrt{n \log
n})$ entries, or $O(\sqrt{n \log n})$ with forgetful routing
(§\ref{sec:snip}).  Note that even without forgetful routing there
are $O(\sqrt{n \log n})$ entries in the data plane.

Similarly, each node picks the closest $\Theta(\sqrt{n \log n})$ nodes
for its vicinity again via path vector, for a total of $O(\delta\sqrt{n
	\log n})$ entries, or $O(\sqrt{n \log n})$ with forgetful routing.

	To enable compact source routes, each node stores a mapping from a
	compact forwarding label to an outgoing interface.  In general, this
	will require one entry for each of $\delta$ neighbors.  Although it
	would be unrealistic in real topologies, we might have $\delta >
	\sqrt{n}$. However, the node really needs to remember the mapping only
	for those forwarding labels that will actually be used; these will be
		for the neighbors leading along shortest paths to landmarks or nodes in
			the node's vicinity, of which there are at most $O(\sqrt{n \log n})$.

			Landmarks store the name resolution database~(§\ref{sec:nr}) which has
			one entry for each of the $n$ nodes.  These are hashed onto the
			landmarks according to consistent hashing, which in its simplest form
			with a single hash function gives each landmark a factor $\Theta(\log
			n)$ more than their fair share of the keyspace w.h.p.  Thus, the most
			overloaded nodes will receive $n / \Theta(\sqrt{n \log n} / \log n) =
			O(\sqrt{n \log n})$ name resolution entries in expectation, and (by a
			Chernoff bound) w.h.p.  By simply using multiple hash functions we can
			reduce consistent hashing's load imbalance~\cite{karger97consistent} and
			end up with $O(\sqrt{n/ \log n})$ entries.

			The distributed name database (§\ref{sec:nip}) has each node store
			$O(\log n)$ overlay neighbors, and the names and addresses of nodes in
			its sloppy group.  Recall that the group for node $v$ is defined as
			those nodes  which share the first $\lfloor \log_2 (\sqrt{n/\log_2 n}) +
			O(1) \rfloor$ bits of $h(v)$. This includes $\Theta(\sqrt{n \log n})$
			nodes in expectation and, again by a Chernoff bound, with high
			probability.

			Thus, in total the algorithm requires $O(\sqrt{n \log n})$ routing table
			entries.   
			\end{proof}

\section{Evaluation}
\label{sec:eval}

We evaluate \disco in comparison with two recent proposals for scalable
routing, S4 and VRR, over a variety of networks. Our results confirm our
main goals.  First, we find \disco has an extremely balanced
distribution of state across nodes, and across topologies, while in both
S4 and VRR some nodes have a very large amount of routing state in
realistic topologies.  Second, \disco maintains low stretch in all
cases, even for the first packet of a flow, while the other protocols
can have high first-packet stretch; this is particularly evident for
a topology annotated with link latencies rather than simply hopcount.

\cutatlastminute{
Our main
findings are as follows:

\parheading{State:} \disco has an extremely balanced distribution of
	state across nodes, and across topologies.  Although S4 has better
	average state, in some realistic topologies many S4 nodes have a
	very large amount of routing state. We show that S4	will have
	$\tilde{\Theta}(n)$ state on some nodes in the worst case. VRR also
	has very high state on many nodes.

\parheading{Stretch:} Like other protocols which do not guarantee low stretch
	on flat names, S4 can have high first-packet stretch, and this is
	particularly evident for a topology annotated with link latencies rather
	than simply hopcount. \disco maintains low stretch in all cases.
	For packets after the first, \disco and S4 both have consistently
	low stretch, with each sometimes outperforming the other.  VRR has
	relatively high mean stretch and very high maximum stretch on all
	packets.

\parheading{Congestion:} Both \disco and S4 exhibit surprisingly low
	congestion, except for the worst $< 1$\% of edges for S4 and to a
	somewhat greater degree for \disco and VRR.

	\cut{In the (synthetic) topologies that we test,
	both \disco and S4 exhibit surprisingly low congestion, nearly
	matching that of a shortest-path algorithm.  VRR has somewhat
	higher congestion. In the AS-level Internet topology, a few edges
	are more congested in \disco than in the shortest-path algorithm.}

\parheading{Control messaging:} Both \disco and S4 significantly
	outperform path vector routing.  \nddcr has slightly higher
	average control messaging than S4, and adding the state to achieve
	name-independence, \disco is somewhat higher than that.
}

\subsection{Methodology}

\parheading{Protocols.}  We evaluate five protocols: \disco, \nddcr,
S4~\cite{mwqls07}, VRR~\cite{caesar2006virtual}, and path vector
routing.  \nddcr is our name-dependent protocol coupled with the
landmark-based name resolution database; it is directly comparable to S4
in its goals but guarantees $\tilde{O}(\sqrt{n})$ per-node
state. Our implementation of S4 is as in~\cite{mwqls07} except that we
use path vector for cluster and landmark routing, making it more
comparable to \nddcr. We evaluated VRR with $r=4$ virtual neighbors as
in~\cite{caesar2006virtual}.  VRR's converged state depends on the order
of node joins; we start with a random node and grow the connected
component of joined nodes outward.

\parheading{Simulators.}  We simulated \disco, S4, and path vector in a
custom discrete event simulator.  For topologies larger than $1024$
nodes, we built a static simulator which calculates the post-convergence
state of the network.  We evaluate VRR in the static simulator
only. Our simulator for VRR scales poorly so we present only results
on $1024$ node topologies for VRR.

In many cases, for large topologies, we sample a fraction of nodes or
source-destination pairs to compute state, stretch, and congestion.

\parheading{Topologies.} Our results include
(1) a 30,610-node AS-level map of the Internet~\cite{caida-as}; (2) a
192,244-node router-level map of the Internet~\cite{caida-router};
(3) $G(n,m)$ random graphs of various sizes, i.e., $n$ nodes with $m$
uniform-random edges, with $m$ set so that the average degree is $8$;
and (4) geometric random graphs of various sizes with average degree $8$.

\subsection{Results}

	\begin{figure*}
	\begin{centering}
	\includegraphics[scale=0.76]{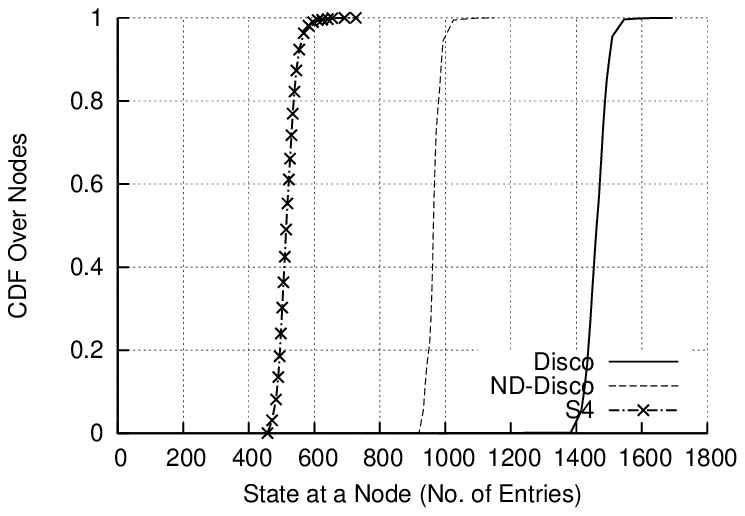}
	\includegraphics[scale=0.76]{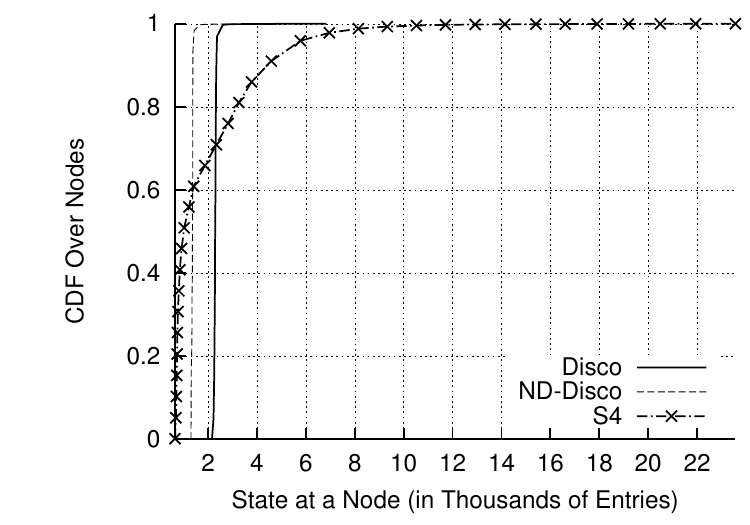}
	\includegraphics[scale=0.76]{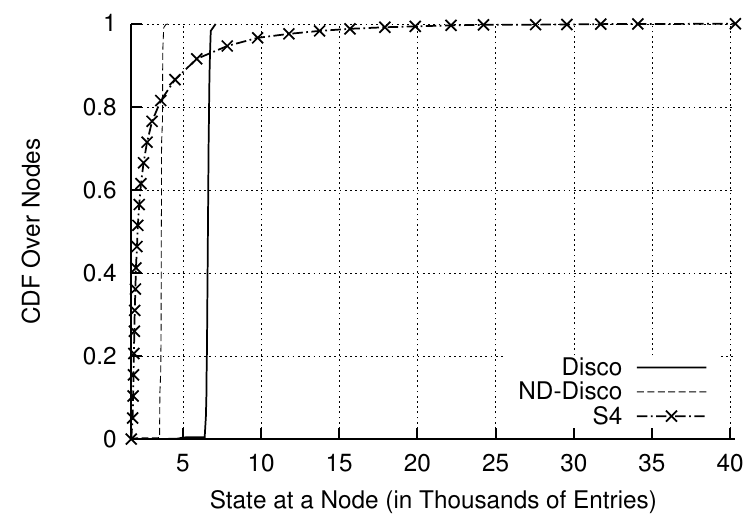}
	\par\end{centering}
	\vspace{-0.1in}
	
	\caption{\label{fig:state} State in a 16,384-node
	Geometric Random Graph (left), 
	Internet AS-level graph (middle) and Internet Router-level graph (right).}
	\end{figure*}

\parheading{State.} We measure data plane state for the protocols.  This
includes everything necessary to forward a packet after the protocol has
converged: forwarding entries for landmarks and vicinities, name
resolution entries on the landmark database, forwarding label mappings
for our compact source route format in \nddcr, and the address
mappings for \disco.

Fig.~\ref{fig:state} shows S4 does well on the random graphs, but is
extremely unbalanced on the Internet topologies. Intuitively, S4 does
poorly on topologies where some nodes are more ``central'' than others. 
In fact, it is easy to show that S4 will have $\tilde{\Theta}(n)$ state
on some nodes in the worst case\footnote{Consider a tree whose root has
$\sqrt{n}$ children at distance $1$; each of these children has
$\sqrt{n}$ children along an edge of distance $2$.  S4's version of
vicinities is called a {\em cluster}: each node $v$ knows the nodes $w$
which are closer to $v$ than the distance from $w$ to its closest
landmark.  Consider any ``grandchild'' node $v$ which is not picked as a
landmark.  The distance to its parent is $2$, but for most such nodes
the parent is not a landmark.  The distance to its grandparent (the
root) is $3$, but with probability $1 - \tilde{O}(1/\sqrt{n})$ the root
is not a landmark.  Therefore the closest landmark must be another
grandchild at distance $4$ from $v$.  Thus, the majority of the
grandchildren nodes must be in the root's cluster, so the root requires
$\Theta(n)$ routing table entries.}. This demonstrates that S4's
simplification of one of the algorithms of~\cite{tz01} can indeed cause
high state.

In contrast, \disco and \nddcr have very balanced distributions of state
in all cases.  Note that \nddcr is a fairer comparison with S4 since
both protocols are name-dependent, while \disco adds additional state
for name-independence.  Average state is slightly higher in \nddcr than
S4 because of a differing design decision about vicinity size: S4
expands its cluster until it reaches a landmark, while \nddcr and \disco
have vicinities which are fixed at $\Theta(\sqrt{n \log n})$ nodes. 
This difference is not fundamental to \nddcr, but is necessary for \disco
so that there will be an intersection between each node's vicinity and
each destination's sloppy group.

\cut{
Note how Disco and ND-Disco have very similar amount of state at all nodes. 
Landmarks do store a marginally larger amount of state than other nodes (small curve
towards the top of the plot) because of the DHT, but this is a requirement for S4 too.
Thus, Disco and ND-Disco always preserve the state bounds, while S4 does not. 
}
	\begin{figure*}
	\begin{centering}
	\includegraphics[scale=0.76]{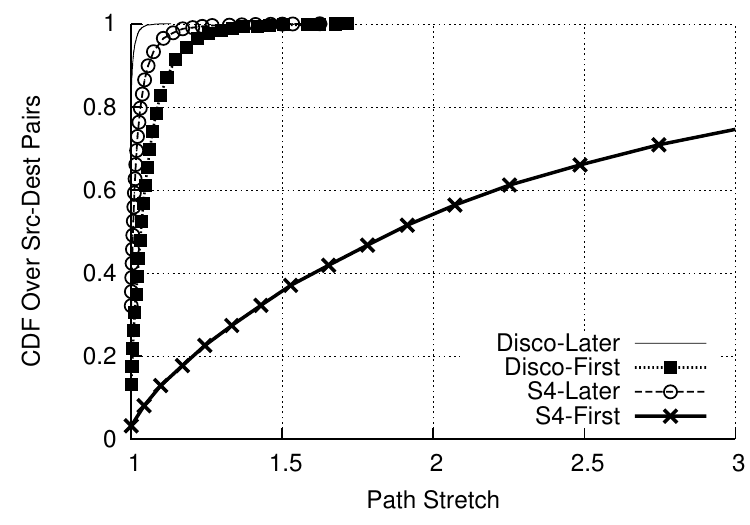}
	\includegraphics[scale=0.76]{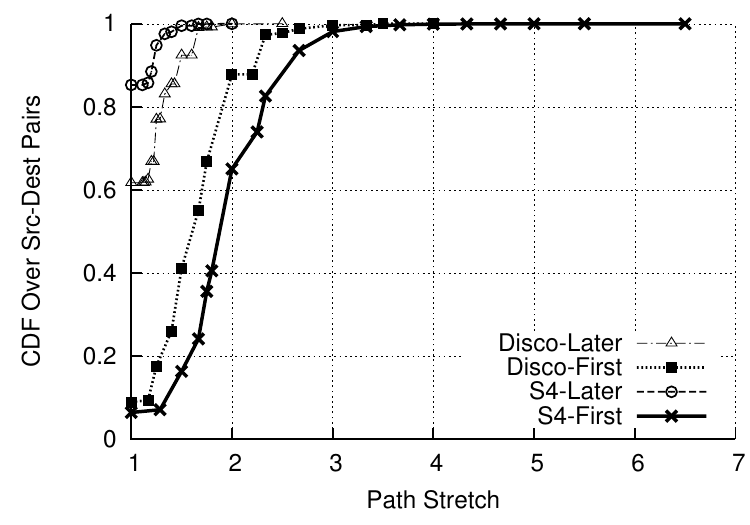}
	\includegraphics[scale=0.76]{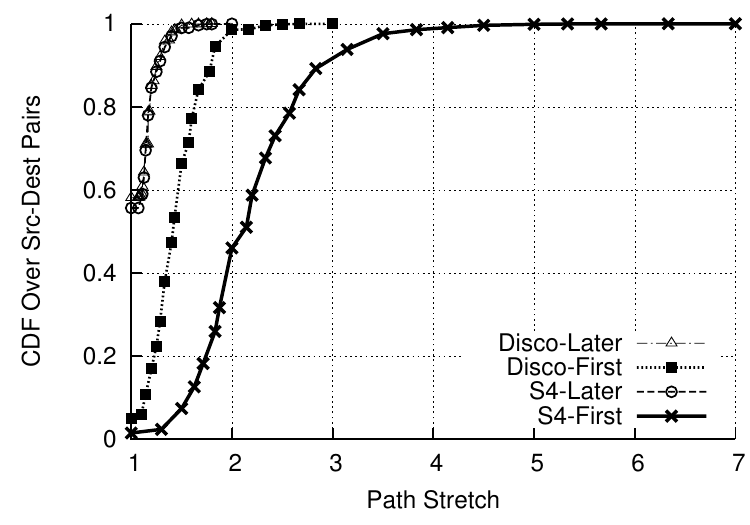}
	\par\end{centering}
	\vspace{-0.1in}
	
	\caption{\label{fig:stretch} Stretch in a 16,384-node Geometric Random Graph (left), 
	Internet AS-level graph (middle) and Internet Router-level graph (right).}
	\end{figure*}

        \begin{figure*}
        \begin{centering}
        \includegraphics[scale=0.76]{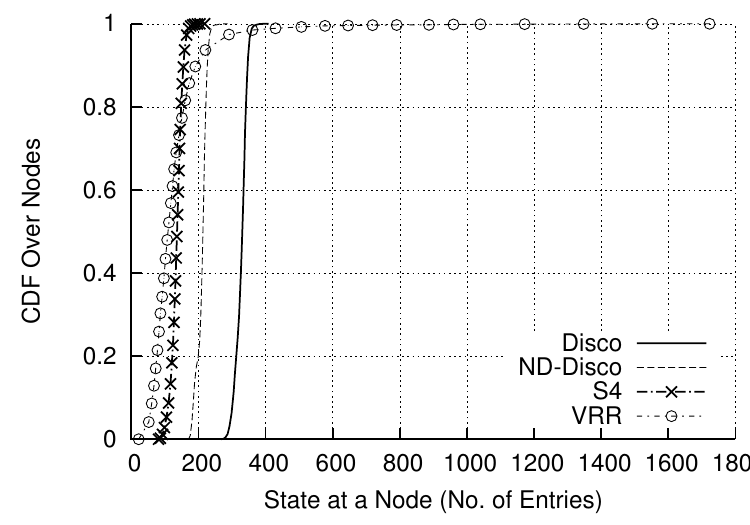}
        \includegraphics[scale=0.76]{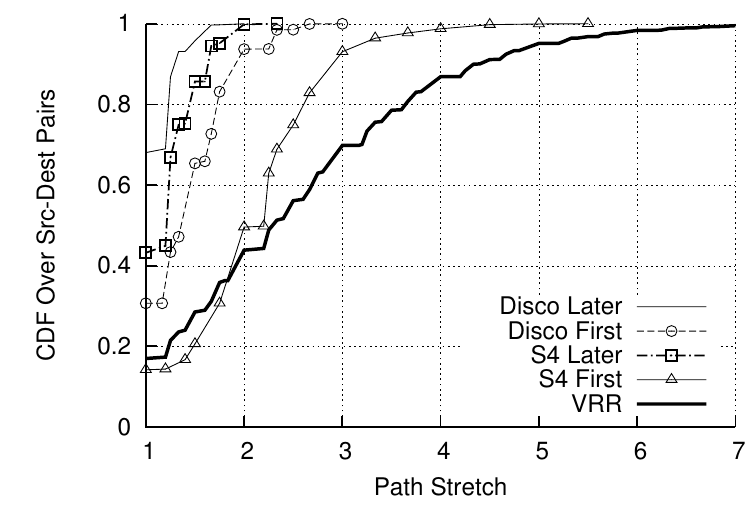}
        \includegraphics[scale=0.76]{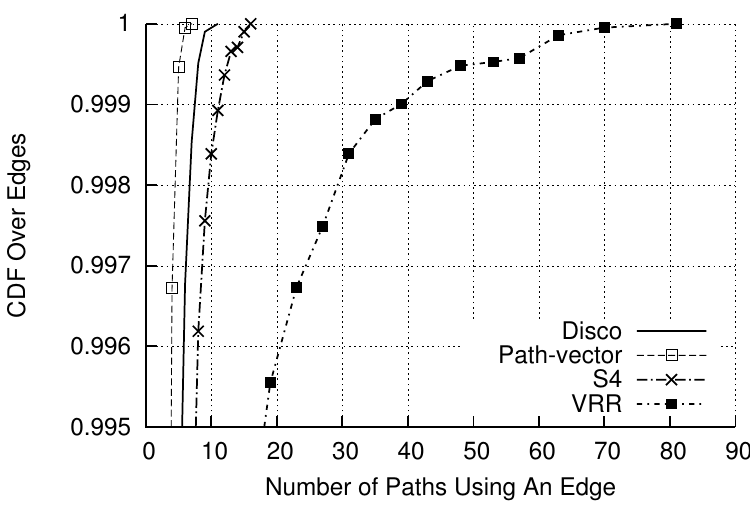}
        \par\end{centering}
        \vspace{-0.1in}
        \caption{\label{fig:random_VRR} State (left), stretch (middle) and congestion (right) comparisons between Disco, VRR and S4 over a 1,024-node $G(n,m)$ random graph.}
        \end{figure*}

        \begin{figure*}
        \begin{centering}
        \includegraphics[scale=0.76]{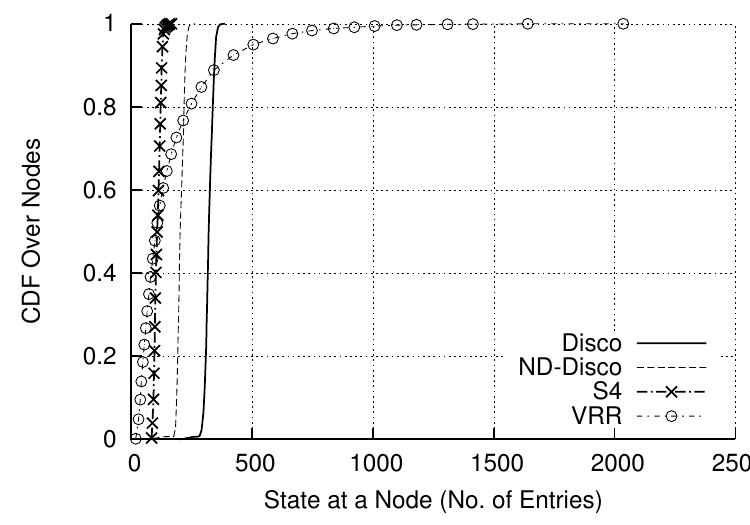}
	\includegraphics[scale=0.76]{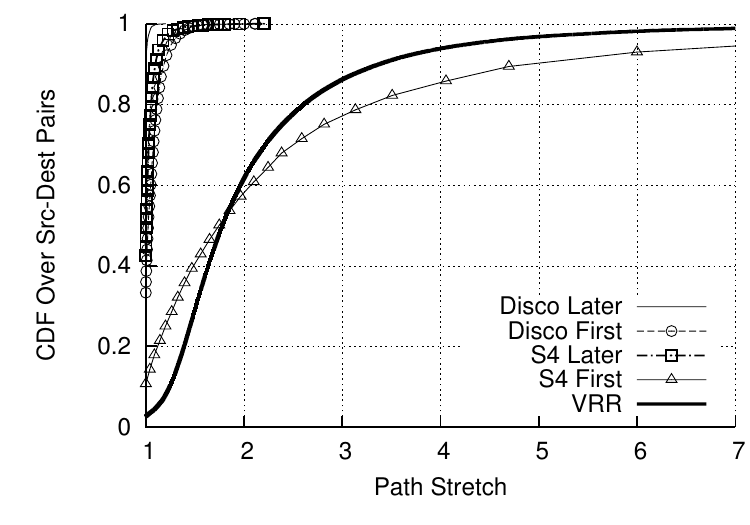}
	\includegraphics[scale=0.76]{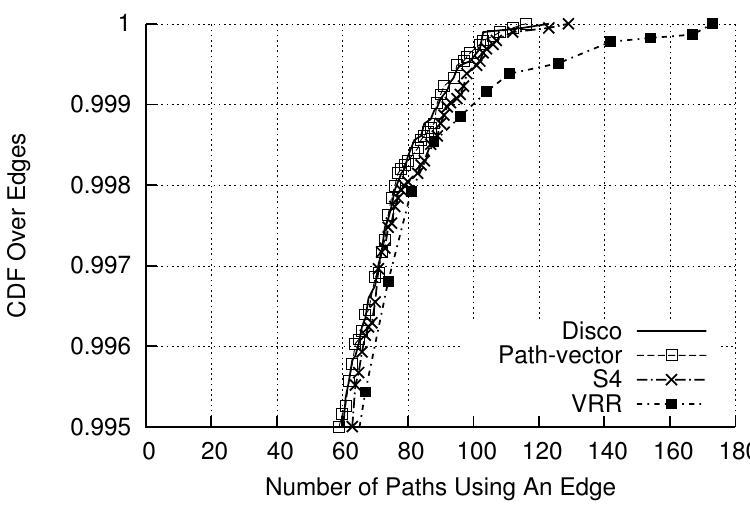}
        \par\end{centering}
        \vspace{-0.1in}
        \caption{\label{fig:geometric_VRR} State (left), stretch (middle) and congestion (right) comparisons between Disco, VRR and S4 over a 1,024-node Geometric random graph.}
        \end{figure*}

	 \begin{figure*}
	         \begin{centering}
		 
		 {\small
		 \begin{tabular}{ l | c | c | c | c }
		          & {\bf AS-Level} & {\bf Router-level} & {\bf Geometric-16384} & {\bf GNM-16384} \tabularnewline
			  \hline
			  \hline
			  No Shortcutting & 1.403 & 1.301  & 1.051 & 1.351 \tabularnewline \hline
			  To-Destination Shortcuts & 1.271  & 1.191 & 1.005 & 1.285 \tabularnewline \hline
			  Shorter\{ReversePath, ForwardPath\} & 1.327 &  1.229 & 1.026 & 1.266 \tabularnewline \hline
			  No Path Knowledge & 1.153  & 1.092 & 1.002 & 1.179 \tabularnewline \hline
			  Up-Down Stream & 1.022  & 1.041 & 1.004 & 1.263 \tabularnewline \hline
			  Using Path Knowledge & 1.007 & 1.015 & 1.002 & 1.159 \tabularnewline \hline
			  \end{tabular}
		 }
		         \par\end{centering}
			         \caption{\label{fig:sc-strategies} Effect of shortcutting strategies: Mean 
				 stretch for different shortcutting heuristics.}
						         \end{figure*}

\begin{figure*}
\begin{centering}
\begin{minipage}{3in}

        \caption{\label{fig:routerlevel_state} Comparison of state at a node for the Router-level 
	Internet topology. S4 does better on average, but severely breaks worst-case bounds.}
\end{minipage}
\hfill
\begin{minipage}{3.8in}
\small{
\begin{tabular}{ l | c | c | c | c | c | c }
	 & \multicolumn{2}{|c|}{{\bf No. of Entries}} & \multicolumn{2}{|c|}{{\bf Bytes (IPv4)}} & \multicolumn{2}{|c}{{\bf Bytes (IPv6)}} \tabularnewline 
\hline
\hline
	& Mean & Max & Mean & Max & Mean & Max \tabularnewline 
\hline
S4 & 3123.9 & 40339.0 & 18.31 & 236.36 & 54.93 & 709.084 \tabularnewline \hline
ND-Disco & 3619.88 & 4310.0 & 21.15 & 34.06 & 63.43 & 100.10 \tabularnewline \hline
Disco & 6592.42 & 7309.0 & 53.03 & 61.608 & 165.31 & 188.20 \tabularnewline \hline
\end{tabular}
}
\end{minipage}
\end{centering}
	\end{figure*}

As Fig.~\ref{fig:random_VRR} (left) and Fig.~\ref{fig:geometric_VRR} (left)
 show, VRR fares very poorly compared to
both S4 and Disco on both topologies in terms of state. It does even
worse than path vector for a few nodes.  This is because VRR constructs
end-to-end paths and stores state at each node along the path, so in theory
it could have as many as $\Theta(n^2)$ routing entries at a node (though it
does not approach this worst case).

All results on routing state discussed so far only count the number of
entries. In Table.~\ref{fig:routerlevel_state}, we present numbers for
state in terms of kilobytes of memory. The size of source routes is
determined using the scheme described in \S\ref{sec:snip}.  As the table
shows, the conclusions are similar when measuring bytes instead of entries.

\parheading{Stretch.} Fig.~\ref{fig:stretch} shows the distribution of
stretch in S4, \disco, and \nddcr.  We call the reader's attention to
the difference between the three graphs.  In the Internet router and AS
topologies, links are unweighted (or equivalently, all link latencies
are $1$).  Thus, maximum stretch is limited simply because the ratio of
the longest to the shortest path is bounded.  In contrast, the geometric
random graph includes link latencies and $S4$ experiences worst-case
stretch of $72$ while \disco's highest stretch is just over $2$.
Unfortunately, a latency-annotated Internet topology was not available
to us; it is likely that this would qualitatively change the protocols'
stretch.

After the first packet, \disco does slightly better than S4 on random
and geometric graphs, both perform similarly on the router-level
topology, and S4 does better on the AS-level graph.

Fig.~\ref{fig:random_VRR} (middle) and Fig.~\ref{fig:geometric_VRR} (middle)
compare stretch in \disco, S4 and VRR over
a $G(n,m)$ random graph and a geometric random graph with latency
values. Like state, VRR provides no bounds on stretch. The maximum
stretch values seen for the first packets in the geometric random
graph are $2.4$ for \disco, $30$ for S4, and $39$ for VRR.
	
	\begin{figure}
	\begin{centering}
	\includegraphics[scale=0.76]{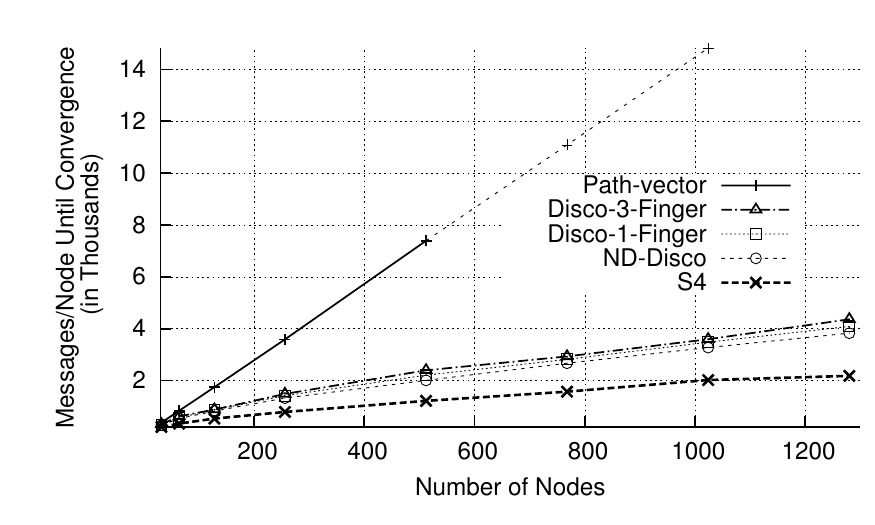}
	\par\end{centering}
	
	\vspace{-0.1in}
	\caption{\label{fig:messaging} Mean messages per node sent until convergence in path vector, S4, \nddcr and \disco (with $1$ and $3$ fingers for address dissemination) for $G(n,m)$ graphs of increasing size. The curve for path vector has been linearly extrapolated beyond 512 nodes.}
	\end{figure}

As described in \S\ref{sec:snip}, the above results use the ``No Path Knowledge''
shortcutting heuristic. Fig.~\ref{fig:sc-strategies} shows the relative effect of different
shortcutting heuristics. Using Path Knowledge, the stretch can be brought
very close to $1$ (last row of table). 
\cut{It is also worth noting how different
heuristics work better for different topologies.}

\parheading{Control messaging.} We compare messaging costs during
initial convergence only, leaving continuous churn to future work.  The
results are shown in Fig.~\ref{fig:messaging}.  \nddcr's messaging
overhead is slightly greater than S4's, reflecting its somewhat larger
vicinities as discussed above.  However, \disco has only a small amount of additional messaging to
support routing on flat names with low stretch, demonstrating the efficiency of our dissemination protocol.  We show results for $1$ and $3$ outgoing fingers per node
in the address dissemination overlay network. The use of a 
larger number of fingers leads to lower overlay diameter,
and hence faster convergence, at the
cost of increased messaging. For instance, for a $1024$-node
$G(n, m)$ topology, with each node picking $1$ outgoing finger, the average and
maximum distances traveled by address announcements were measured to be $5.77$
and $24$ respectively, while picking $3$ random fingers reduced these
numbers to $3.04$ and $16$. At the same time, the number of messages increased by $3.3\%$.

\cut{
Our comparison of messaging costs for convergence is somewhat
rudimentary in that we simulate Disco (and S4) as occurring
phase-by-phase - first nodes find out about their vicinity and the
landmarks, then they store their addresses in the DHT over landmarks,
choose random long-distance neighbors to send addresses of similar-hash
nodes and finally the dissemination of these addresses occurs. We are
working on more accurately modeling this to get better estimates of
messaging costs. The current estimates are summarized in Fig.~\ref{fig:messaging}
}

\parheading{Congestion.} To compute congestion, we have each node route
to a random destination and count the number of times each edge is used.
 One might have guessed that the compact routing schemes
would have high congestion since many routes pass near landmarks. 
However, Fig.~\ref{fig:random_VRR} (right) and
Fig.~\ref{fig:geometric_VRR} (right) show that in synthetic topologies,
congestion is surprisingly close to that of shortest-path routing. VRR
shows higher congestion than all the three other schemes. On the AS-level Internet topology
(Fig.~\ref{fig:asgraph_congestion}), \disco does experience more
congestion for a few edges than shortest-path routing.

\parheading{Scaling.}
Fig.~\ref{fig:scaling} shows how Disco, \nddcr and S4 scale with
increasing number of nodes $n$ in geometric random graphs, showing mean
stretch and mean state. S4's first-packet stretch remains high, but for
the rest of the curves, the stretch is similarly low and close to 1.
Routing state grows as $\tilde{O}(\sqrt{n})$.

	\begin{figure*}
	\begin{minipage}{1.8in}
	\caption{\label{fig:scaling} \disco vs. S4: mean stretch (left) and mean state (right)
	in geometric random graphs of increasing size.}
	\end{minipage}
	\hfill
	\begin{minipage}{5in}
	\includegraphics[scale=0.8]{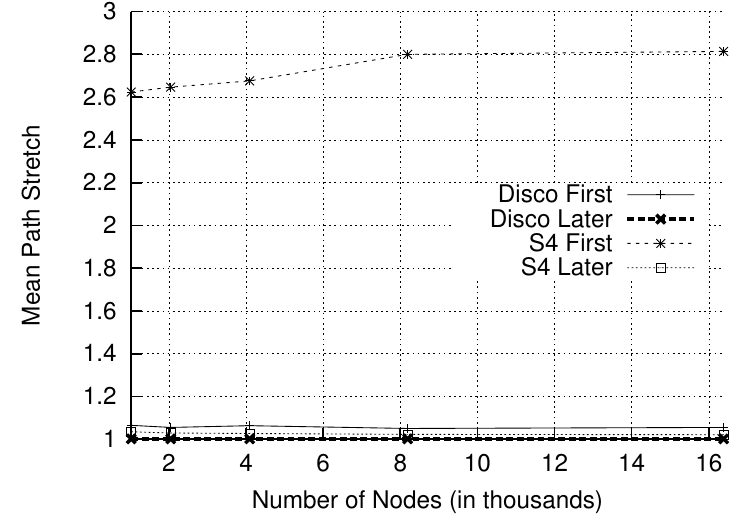}
	\includegraphics[scale=0.8]{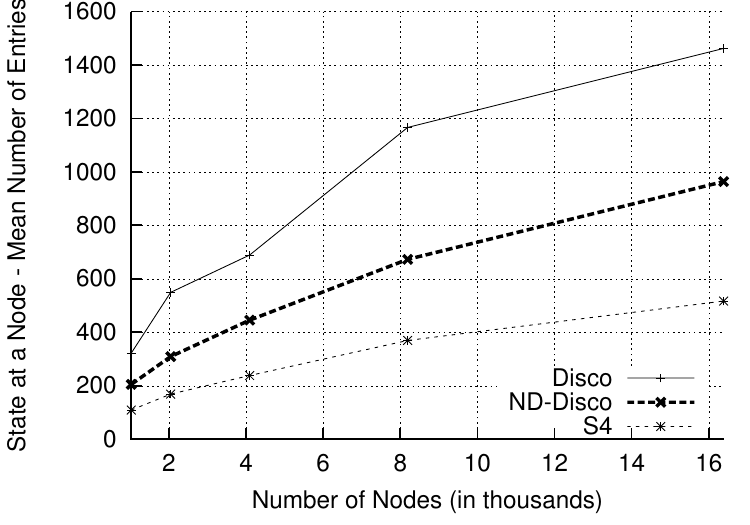}
	\end{minipage}
	\end{figure*}

	\cutatlastminute{

	\begin{figure*}
	\begin{centering}
	\includegraphics[scale=0.8]{figures/experiments/geometric_rg_stretchMean}
	\includegraphics[scale=0.8]{figures/experiments/geometric_rg_stateMean}
	\par\end{centering}
	\vspace{-0.1in}
	
	\caption{\label{fig:scaling} \disco vs. S4: mean stretch (left) and mean state (right)
	in geometric random graphs of increasing size.} \end{figure*}
	}

	\begin{figure}
	\begin{centering}
	\includegraphics[scale=0.76]{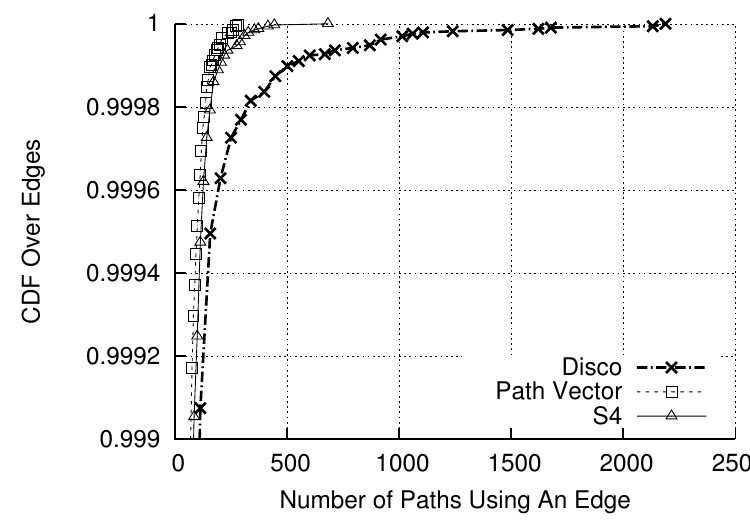}
        \par\end{centering}
        \vspace{-0.1in}

        \caption{\label{fig:asgraph_congestion} On the AS-level Internet topology,
	a small fraction ($0.05\%$) of edges face significantly more congestion than shortest-path routing.}
	\end{figure}

\parheading{Accuracy of static simulation.}
We used a static simulation of the network's state after convergence to
calculate state, stretch and congestion results for large topologies. Our
comparison of results from both the static simulator and the full discrete 
event simulator shows that the static simulator achieves good accuracy.
For instance, for the $1024$-node random graph, the difference between
mean stretch as measured by the static simulator is within $0.9\%$ for
\disco's later packets and $0.7\%$ for S4's later packets. Both stretches
are inflated by similarly small amounts.

\cut{
Fig. shows that results from the static simulator
closely match those from our full discrete event simulator for a
$G(n,m)$ random graph with $1024$ nodes. The only noticeable difference
is in the stretch for \disco's first packet.  This is a result of a
slightly differing implementation: when looking for a destination's
address, the discrete event simulator goes to the node whose
identifier-hash has the longest prefix match with the destination's
identifier-hash. In contrast, the static simulator uses the closest-node
matching the decided number of bits. In all other cases, for any
point on the $x$ axis, the CDF differs by
}
\cut{between the stretch for Disco's first packet - the mean difference
between the two CDF curves from the static evaluation and the simulation
is roughly 0.0657. The fact that there is a non-negligible difference is
unsurprising - when looking for a destination's address, the discrete
event simulator goes to the node whose identifier-hash has the longest
prefix match with the destination's identifier-hash. In contrast, the
static simulator uses the closest-node matching the decided number of
bits. As discussed before, it is possible to optimize to use a
'long-enough match' for getting stretch closer to the static simulator.
All other results are very close; for instance, averaged over the
$x$ axis, the CDF values differ by $0.0037$ for \disco.}

\parheading{Error in Estimating Number of Nodes.}
The previous results assume all nodes know the value of $n$.
Here, we inject random errors of up to 60\% in this estimation.
With $60\%$ random error, across $5$ runs on the $1024$-node random 
graph, only one node failed to find in its vicinity a node
in only one of the sloppy groups, and hence failed to reach all 
destinations in that group. With $40\%$ random error, all nodes were
able to reach all nodes and mean stretch increased marginally by $0.6\%$ from
$1.253$ to $1.261$.  Note that this is an extreme case since
we can ensure error is much lower than $40\%$; §\ref{sec:assumptions}.

\cut{As seen in
Fig.~\ref{fig:numError}, this does not
seem to impact results in a major way---the stretch is similar and no disconnectivity
is observed. The mean difference between the two CDF curves for Disco's later packet
stretch is a very small 0.0034.
}
\cut{
The state does change because each node now has a different notion of
what its vicinity size should be - this results in a larger spread of
the state CDF. In practice, This experiment is only proof of concept -
in practice, we expect Synopsis Diffusion to be significantly more
accurate.
}


\section{Conclusion}
\label{sec:conclusion}

Traditionally, hierarchy has been the only way to scale general-purpose routing. Hierarchy has led to inefficiency in routes, and the use of location-dependent addresses which complicate mobility and management. This paper stands in a long line of work which has progressively brought compact routing, an algorithmically promising approach which eschews hierarchy, closer to practical reality. \disco takes another step forward by providing distributed and dynamic routing on flat names with guaranteed scalability and low stretch.

One area of future work deals with improving our stretch bounds and offering different tradeoffs.  In particular, is it possible to reduce the worst-case first-packet stretch from $7$ to the optimal $3$ in a distributed way?  \disco has chosen one point in the state/stretch tradeoff space, with $\tilde{O}(\sqrt{n})$ state and stretch $\leq 3$ for packets after the first; can we translate other tradeoff points to a distributed setting for name-independent routing?

Another significant outstanding question is to what extent \disco can support policy routing in the Internet.  In many ways, \disco can provide a significant amount of flexibility.  For example, although \disco chooses landmarks randomly, its state and stretch stretch bounds require only that each node has at least one landmark within its vicinity and that there are $\tilde{O}(\sqrt{n})$ total landmarks. These rules would permit an operator to choose landmarks in non-random ways, for example to pick a more well-provisioned landmark, ensure that a node's landmark is within its own domain, or use a landmark service supplied by a network provider.  And to maintain a globally scalable infrastructure with $\tilde{O}(\sqrt{n})$ total landmarks, landmark identifiers could be purchased from or allocated by a registry, much as AS numbers are today. However, policies also pose challenges.  \disco assumes that the route $v \leadsto \ell_v$ can also be used in the reverse direction in $v$'s address in order to route $\ell_v \leadsto v$, and assumes similar reversibility for vicinity routes (when checking the destination's vicinity for a path from the source), which would limit the possible policies. A second problem is that policy routing can significantly lengthen paths; routing through the landmark nearest to a destination many not provide a stretch guarantee for general policies, even when the route length is compared to the shortest policy-compliant path.  Resolving these challenges would be an interesting area of future work.

\cutatlastminute{  This paper brings a fundamentally new approach to the table by eschewing
hierarchy and routing on flat names with guaranteed scalability and
efficiency. \disco opens up several interesting directions for future
research.

First, at a technical level, we leave several open questions. Is it
possible to reduce the worst-case first-packet stretch from $7$ to the
optimal $3$ in a distributed way?   \disco has chosen one point in
the state/stretch tradeoff space, with $\tilde{O}(\sqrt{n})$ state and
stretch $\leq 3$ for packets after the first; can we translate other
tradeoff points to a distributed setting for name-independent routing?

Second, we believe our techniques have the potential to impact multiple
application domains. Content-centric
networks~\cite{jstpbb09,koponen2007data,gc01,sarela2008rtfm} are
particularly relevant since they have extremely large scale in terms of
number of objects, and often use flat self-certifying identifiers.  One
key difference that would have to be overcome is that compact routing
operates at the level of routers, while content-centric networks have a
large number of first-class objects on each physical node.  However, our
techniques may offer a promising starting point for scalable content
routing.}

We thank our shepherd, Bruce Maggs, and the anonymous reviewers for helpful comments.  This work was supported by NSF CNS 10-17069.  The second author was a visiting researcher at Intel Labs Berkeley during part of this project.


\vfill\eject

{\bibliographystyle{abbrv}

\setlength{\itemsep}{-2mm}
\scriptsize{
\bibliography{arxiv}

\begin{thebibliography}{10}

\bibitem{abraham05practical}
I.~Abraham, A.~Badola, D.~Bickson, D.~Malkhi, S.~Maloo, and S.~Ron.
\newblock Practical locality-awareness for large scale information sharing.
\newblock In {\em Proc. IPTPS}, 2005.

\bibitem{abraham2004routing}
I.~Abraham, C.~Gavoille, and D.~Malkhi.
\newblock {Routing with improved communication-space trade-off}.
\newblock {\em Lecture notes in computer science}, pages 305--319, 2004.

\bibitem{agmnt04}
I.~Abraham, C.~Gavoille, D.~Malkhi, N.~Nisan, and M.~Thorup.
\newblock Compact name-independent routing with minimum stretch.
\newblock In {\em SPAA '04: Proceedings of the sixteenth annual ACM symposium
  on Parallelism in algorithms and architectures}, pages 20--24, New York, NY,
  USA, 2004. ACM.

\bibitem{Abraham04land}
I.~Abraham, D.~Malkhi, and O.~Dobzinski.
\newblock Land: Stretch (1 + e) locality-aware networks for dhts.
\newblock In {\em Proc. SODA}, pages 550--559, 2004.

\bibitem{andersen2008accountable}
D.~G. Andersen, H.~Balakrishnan, N.~Feamster, T.~Koponen, D.~Moon, and
  S.~Shenker.
\newblock {Accountable Internet Protocol (AIP)}.
\newblock In {\em ACM SIGCOMM}, Seattle, WA, August 2008.

\bibitem{aclrt03}
M.~Arias, L.~J. Cowen, K.~A. Laing, R.~Rajaraman, and O.~Taka.
\newblock Compact routing with name independence.
\newblock In {\em SPAA '03: Proceedings of the fifteenth annual ACM symposium
  on Parallel algorithms and architectures}, pages 184--192, New York, NY, USA,
  2003. ACM.

\bibitem{ablp89}
B.~Awerbuch, A.~Bar-noy, N.~Linial, and D.~Peleg.
\newblock Compact distributed data structures for adaptive routing.
\newblock In {\em Proc. STOC}, pages 479--489. ACM Press, 1989.

\bibitem{brutlag09}
J.~Brutlag.
\newblock {Speed matters for Google web search}, June 2009.
\newblock \url{http://code.google.com/speed/files/delayexp.pdf}.

\bibitem{caesar2006virtual}
M.~Caesar, M.~Castro, E.~Nightingale, G.~O'Shea, and A.~Rowstron.
\newblock {Virtual ring routing: network routing inspired by DHTs}.
\newblock {\em ACM SIGCOMM Computer Communication Review}, 36(4):362, 2006.

\bibitem{rofl}
M.~Caesar, T.~Condie, J.~Kannan, K.~Lakshminarayanan, I.~Stoica, and
  S.~Shenker.
\newblock {ROFL}: routing on flat labels.
\newblock In {\em ACM SIGCOMM}, 2006.

\bibitem{clark1988design}
D.~Clark.
\newblock {The design philosophy of the DARPA Internet protocols}.
\newblock In {\em Proc. SIGCOMM}, page 114, 1988.

\bibitem{cbfp03}
D.~Clark, R.~Braden, A.~Falk, and V.~Pingali.
\newblock Fara: reorganizing the addressing architecture.
\newblock {\em SIGCOMM Comput. Commun. Rev.}, 33(4):313--321, 2003.

\bibitem{lisp}
D.~Farinacci, V.~Fuller, D.~Meyer, and D.~Lewis.
\newblock Locator/{ID} separation protocol ({LISP}).
\newblock In {\em Internet-Draft}, March 2009.

\bibitem{bvr}
R.~Fonseca, S.~Ratnasamy, J.~Zhao, C.~Ee, D.~Culler, S.~Shenker, and I.~Stoica.
\newblock {Beacon vector routing: Scalable point-to-point routing in wireless
  sensornets}.
\newblock In {\em NSDI}, May 2005.

\bibitem{ford08phd}
B.~A. Ford.
\newblock {\em UIA: A Global Connectivity Architecture for Mobile Personal
  Devices}.
\newblock PhD thesis, Massachusetts Institute of Technology, September 2008.

\bibitem{fg95}
P.~Fraigniaud and C.~Gavoille.
\newblock Memory requirement for universal routing schemes.
\newblock In {\em PODC '95: Proceedings of the fourteenth annual ACM symposium
  on Principles of distributed computing}, pages 223--230, New York, NY, USA,
  1995. ACM.

\bibitem{g06}
C.~Gavoille.
\newblock An overview on compact routing schemes.
\newblock In {\em Presentation at 2nd Research Workshop on Flexible Network
  Design}, October 2006.
\newblock
  \url{http://www.aladdin.cs.cmu.edu/workshops/netdes2/slides/gavoille.pdf}.

\bibitem{gg01}
C.~Gavoille and M.~Gengler.
\newblock Space-efficiency for routing schemes of stretch factor three.
\newblock {\em J. Parallel Distrib. Comput.}, 61(5):679--687, 2001.

\bibitem{godfrey09pathlet}
P.~B. Godfrey, I.~Ganichev, S.~Shenker, and I.~Stoica.
\newblock Pathlet routing.
\newblock In {\em ACM SIGCOMM}, 2009.

\bibitem{gc01}
M.~Gritter and D.~R. Cheriton.
\newblock An architecture for content routing support in the internet.
\newblock In {\em USENIX Symposium on Internet Technologies and Systems}, March
  2001.

\bibitem{jokela2004host}
P.~Jokela, P.~Nikander, J.~Melen, J.~Ylitalo, and J.~Wall.
\newblock {Host identity protocol-extended abstract}.
\newblock In {\em Wireless World Research Forum}, February 2004.

\bibitem{karger97consistent}
D.~Karger, E.~Lehman, T.~Leighton, R.~Panigrahy, M.~Levine, and D.~Lewin.
\newblock Consistent hashing and random trees: distributed caching protocols
  for relieving hot spots on the world wide web.
\newblock In {\em Proc. STOC}, pages 654--663, New York, NY, USA, 1997. ACM.

\bibitem{karp2000gpsr}
B.~Karp and H.~Kung.
\newblock {GPSR: greedy perimeter stateless routing for wireless networks}.
\newblock In {\em Proc. MobiCom}, pages 243--254. ACM, August 2000.

\bibitem{karpilovsky06forgetful}
E.~Karpilovsky and J.~Rexford.
\newblock Using forgetful routing to control {BGP} table size.
\newblock In {\em CoNEXT}, 2006.

\bibitem{kessler1993cray}
R.~Kessler and J.~Schwarzmeier.
\newblock {CRAY T3D: A new dimension for Cray Research}.
\newblock {\em Compcon Spring'93, Digest of Papers.}, pages 176--182, 1993.

\bibitem{kcr08}
C.~Kim, M.~Caesar, and J.~Rexford.
\newblock {{Floodless in SEATTLE: A Scalable Ethernet Architecture for Large
  Enterprises}}.
\newblock In {\em ACM SIGCOMM}, Seattle, WA, August 2008.

\bibitem{kleinrock1977hierarchical}
L.~Kleinrock and F.~Kamoun.
\newblock {Hierarchical routing for large networks: Performance evaluation and
  optimization}.
\newblock {\em Computer Networks}, 1(3):155--174, January 1977.

\bibitem{koponen2007data}
T.~Koponen, M.~Chawla, B.~Chun, A.~Ermolinskiy, K.~Kim, S.~Shenker, and
  I.~Stoica.
\newblock {A data-oriented (and beyond) network architecture}.
\newblock {\em ACM SIGCOMM Computer Communication Review}, 37(4):192, 2007.

\bibitem{krioukov2004compact}
D.~Krioukov, K.~Fall, and X.~Yang.
\newblock {Compact routing on Internet-like graphs}.
\newblock In {\em IEEE INFOCOM}, volume~1, pages 209--219. Citeseer, 2004.

\bibitem{krioukov2007compact}
D.~Krioukov, kc~claffy, K.~Fall, and A.~Brady.
\newblock {On compact routing for the internet}.
\newblock {\em ACM SIGCOMM Computer Communication Review}, 37(3):52, 2007.

\bibitem{levchenko2008xl}
K.~Levchenko, G.~Voelker, R.~Paturi, and S.~Savage.
\newblock {XL: An efficient network routing algorithm}.
\newblock In {\em ACM SIGCOMM}, pages 15--26, August 2008.

\bibitem{symphony}
G.~S. Manku, M.~Bawa, P.~Raghavan, and V.~Inc.
\newblock Symphony: Distributed hashing in a small world.
\newblock In {\em Proceedings of the 4th USENIX Symposium on Internet
  Technologies and Systems}, pages 127--140, 2003.

\bibitem{Manku04knowthy}
G.~S. Manku, M.~Naor, and U.~Wieder.
\newblock Know thy neighbor's neighbor: the power of lookahead in randomized
  p2p networks.
\newblock In {\em Proc. STOC}, pages 54--63, 2004.

\bibitem{mwqls07}
Y.~Mao, F.~Wang, L.~Qiu, S.~Lam, and J.~Smith.
\newblock {S4}: Small state and small stretch routing protocol for large
  wireless sensor networks.
\newblock In {\em Proc. NSDI}, April 2007.

\bibitem{mazieres1999separating}
D.~Mazieres, M.~Kaminsky, M.~Kaashoek, and E.~Witchel.
\newblock {Separating key management from file system security}.
\newblock In {\em ACM SOSP}, December 1999.

\bibitem{nath04synopsis}
S.~Nath, P.~B. Gibbons, S.~Seshan, and Z.~R. Anderson.
\newblock Synopsis diffusion for robust aggregation in sensor networks.
\newblock In {\em ACM SenSys}, 2004.

\bibitem{pu88}
D.~Peleg and E.~Upfal.
\newblock A tradeoff between space and efficiency for routing tables.
\newblock In {\em Proc. STOC}, pages 43--52, New York, NY, USA, 1988. ACM.

\bibitem{258523}
C.~G. Plaxton, R.~Rajaraman, and A.~W. Richa.
\newblock Accessing nearby copies of replicated objects in a distributed
  environment.
\newblock In {\em Proc. SPAA}, pages 311--320, New York, NY, USA, 1997. ACM.

\bibitem{pastry}
A.~Rowstron and P.~Druschel.
\newblock {Pastry: Scalable, Distributed Object Location and Routing for
  Large-scale Peer-to-Peer Systems}.
\newblock In {\em Proc. Middleware}, 2001.

\bibitem{sarela2008rtfm}
M.~S{\"a}rel{\"a}, T.~R. Aho, and T.~Tarkoma.
\newblock {RTFM: Publish/subscribe internetworking architecture}.
\newblock {\em ICT Mobile Summit}, 2008.

\bibitem{singla10disco}
A.~Singla, P.~B. Godfrey, K.~Fall, G.~Iannaccone, and S.~Ratnasamy.
\newblock Scalable routing on flat names.
\newblock In {\em 6th ACM International Conference on emerging Networking
  Experiments and Technologies (CoNEXT)}, December 2010.

\bibitem{stoica2002internet}
I.~Stoica, D.~Adkins, S.~Zhuang, S.~Shenker, and S.~Surana.
\newblock {Internet indirection infrastructure}.
\newblock In {\em Proc. SIGCOMM}, pages 73--86. ACM New York, NY, USA, 2002.

\bibitem{stoica01chord}
I.~Stoica, R.~Morris, D.~Karger, M.~F. Kaashoek, and H.~Balakrishnan.
\newblock Chord: A scalable peer-to-peer lookup service for internet
  applications.
\newblock In {\em Proc. SIGCOMM}, 2001.

\bibitem{tz01}
M.~Thorup and U.~Zwick.
\newblock Compact routing schemes.
\newblock In {\em Proc. SPAA}, pages 1--10. ACM, 2001.

\bibitem{t88}
P.~F. Tsuchiya.
\newblock The landmark hierarchy: a new hierarchy for routing in very large
  networks.
\newblock In {\em Proc. SIGCOMM}, pages 35--42, New York, NY, USA, 1988. ACM.

\bibitem{walfish2004untangling}
M.~Walfish, H.~Balakrishnan, and S.~Shenker.
\newblock {Untangling the Web from DNS}.
\newblock In {\em Proc. NSDI}, page~17. USENIX Association, 2004.

\bibitem{westphal2008compact}
C.~Westphal and J.~Kempf.
\newblock {A compact routing architecture for mobility}.
\newblock In {\em Proc. MobiArch}, pages 1--6. ACM, 2008.

\bibitem{caida-router}
{Young Hyun, Bradley Huffaker, Dan Andersen, Emile Aben, Colleen Shannon,
  Matthew Luckie, and kc claffy}.
\newblock {The CAIDA IPv4 Routed /24 Topology Dataset}.
\newblock
  \url{http://www.caida.org/data/active/ipv4_routed_24_topology_dataset.xml}
  (accessed on 2009-11).

\bibitem{caida-as}
{Young Hyun, Bradley Huffaker, Dan Andersen, Emile Aben, Matthew Luckie, kc
  claffy, and Colleen Shannon}.
\newblock {The IPv4 Routed /24 AS Links Dataset}.
\newblock
  \url{http://www.caida.org/data/active/ipv4_routed_topology_aslinks_dataset.xml}
  (accessed on 2009-11).

\bibitem{zhao01tapestry}
B.~Y. Zhao, J.~D. Kubiatowicz, and A.~D. Joseph.
\newblock Tapestry: An infrastructure for fault-tolerant wide-area location and
  routing.
\newblock Technical Report UCB/CSD-01-1141, UC Berkeley, Apr. 2001.

\end{thebibliography}
}}

\end{document}